\newtheorem{theorem}{Theorem}
\newtheorem{lemma}[theorem]{Lemma}
\newtheorem{cor}[theorem]{Corollary}
\theoremstyle{definition}
\newcommand{\abs}[1]{\lvert #1 \rvert}
\newcommand{\ip}[2]{\langle #1 , #2\rangle}
\newcommand{\bigip}[2]{\bigl\langle #1, #2 \bigr\rangle}
\newcommand{\bigceil}[1]{\bigl\lceil #1 \bigr\rceil}
\newcommand{\ket}[1]{\lvert\microspace #1 \microspace \rangle}
\newcommand{\bigket}[1]{\bigl\lvert\microspace #1 \microspace \bigr\rangle}
\newcommand{\bra}[1]{\langle\microspace #1 \microspace \rvert}
\newcommand{\bigbra}[1]{\bigl\langle\microspace #1 \microspace \bigr\rvert}
\newcommand\complex{\mathbb{C}}
\newcommand\real{\mathbb{R}}
\newcommand{\tinyspace}{\mspace{1mu}}
\newcommand{\tsp}{\mspace{1mu}}
\newcommand{\microspace}{\mspace{0.5mu}}
\newcommand{\op}[1]{\operatorname{#1}}
\newcommand{\tr}{\operatorname{Tr}}
\newcommand{\norm}[1]{\lVert\tinyspace #1 \tinyspace\rVert}
\newcommand{\bignorm}[1]{\bigl\lVert\tinyspace #1 \tinyspace\bigr\rVert}
\newcommand{\Bignorm}[1]{\Bigl\lVert\tinyspace #1 \tinyspace\Bigr\rVert}
\newcommand{\biggnorm}[1]{\biggl\lVert\tinyspace #1 \tinyspace\biggr\rVert}
\newcommand\I{\mathds{1}}
\newcommand{\setft}[1]{\mathrm{#1}}
\newcommand{\Density}{\setft{D}}
\newcommand{\Pos}{\setft{Pos}}
\newcommand{\Unitary}{\setft{U}}
\newcommand{\Herm}{\setft{Herm}}
\newcommand{\Lin}{\setft{L}}
\newcommand{\Channel}{\setft{C}}
\newenvironment{mylist}[1]{\begin{list}{}{
	\setlength{\leftmargin}{#1}
	\setlength{\rightmargin}{0mm}
	\setlength{\labelsep}{2mm}
	\setlength{\labelwidth}{8mm}
	\setlength{\itemsep}{0mm}}}
	{\end{list}}
\newcommand{\reg}[1]{\mathsf{#1}}
\newcommand\X{\mathcal{X}}
\newcommand\Y{\mathcal{Y}}
\newcommand\B{\mathcal{B}}
\newcommand\V{\mathcal{V}}
\newcommand\C{\mathcal{C}}
\renewcommand\S{\mathcal{S}}
\newcommand\K{\mathcal{K}}
\definecolor{White}{rgb}{1,1,1}
\definecolor{Black}{rgb}{0,0,0}
\definecolor{LightGray}{rgb}{.81,.81,.81}
\colorlet{ChannelColor}{LightGray}
\colorlet{ChannelTextColor}{Black}
\colorlet{ReadoutColor}{White}
\begin{document}

\title{%
  Quantum game theory and the complexity of\newline
  approximating quantum Nash equilibria}

%

\author{John Bostanci}

\affil{%
  Computer Science Department,
  Columbia University
}


\author{John Watrous}

\affil{%
  Institute for Quantum Computing and School of Computer
  Science, University of Waterloo
}
\orcid{0000-0002-4263-9393}

\maketitle

\begin{abstract}
  This paper is concerned with complexity theoretic aspects of a general
  formulation of quantum game theory that models strategic interactions among
  rational agents that process and exchange quantum information.
  In particular, we prove that the computational problem of finding an
  approximate Nash equilibrium in a broad class of quantum games is, like
  the analogous problem for classical games, included in
  (and therefore complete for) the complexity class $\mathrm{PPAD}$.
  Our main technical contribution, which facilitates this inclusion, is an
  extension of prior methods in computational game theory to strategy spaces
  that are characterized by semidefinite programs.
\end{abstract}


\section{Introduction}

Game theory is a fascinating topic of study with connections to computer
science, economics, and the social sciences, among other subjects.
This paper focuses on complexity theoretic aspects of game theory within the
context of quantum information and computation.

Quantum game theory began with the work of David Meyer \cite{Meyer1999}
and Jens Eisert, Martin Wilkens, and Maciej Lewenstein
\cite{EisertWL1999} in 1999.\footnote{
  Some authors argue that the origins of quantum game theory go back further.
  Here, however, we are referring to the specific line of work that
  self-identifies as being concerned with a quantum information theoretic
  variant of game theory in the tradition of von~Neumann and Morgenstern
  \cite{vonNeumannM1953} and Nash \cite{Nash1950,Nash1950-thesis}, as opposed
  to quantum information and computation research that can be associated with
  game theory as a broad umbrella term.}
These works investigated games involving quantum information, highlighting
examples in which quantum players have advantages over classical players.
Many other examples of quantum games, primarily based on the frameworks
proposed by Meyer and Eisert, Wilkens, and Lewenstein, were subsequently
analyzed.
(See, for instance, the survey \cite{GuoZK2008} for summaries and references.)

Aspects of this line of work have been criticized for multiple reasons.
A common point of criticism of many (but certainly not all) quantum game theory
papers is their poorly motivated notion of classical behavior.
In particular, classical players in quantum game theory papers are often
limited to \emph{coherent} permutations of standard basis states, or
similarly restricted classes of unitary operations, while quantum players have
access to a less restricted set of unitary operations, possibly all of them.
This notion of classicality, which is a key ingredient in the original examples
of Meyer and Eisert, Wilkens, and Lewenstein, essentially invites exploitation
by quantum players.
A more standard interpretation of classical behavior in quantum information
theory assumes the complete decoherence of any quantum system a classical
player manipulates.


Another point of criticism, raised by van Enk and Pike \cite{vanEnkP2002}, is
that comparing quantum games with their classical namesakes within the specific
frameworks typically adopted by quantum game theory papers is akin to comparing
apples with oranges.
Although one may argue that these games offer faithful representations of
classical games when players' actions are restricted to permutations of
standard basis states, their quantum reformulations are, simply  put,
different games.
It is therefore not surprising that less restricted quantum players may find
advantages, leading to new Nash equilibria.

However, although it was not their primary focus, Meyer and 
Eisert, Wilkens, and Lewenstein did both clearly suggest more general
definitions of quantum games in which a wide range of interactions could be
considered, including ones in which the criticisms just raised no longer have
relevance.
In particular, Meyer mentions a convex form of his model of quantum games,
in which classical players could be modeled by completely decohered operations.
And, Eisert, Wilkens, and Lewenstein, in a footnote of their paper, describe a
model in which players' actions correspond not just to unitary operations, but
to arbitrary quantum channels (as modeled by completely positive and trace
preserving linear maps).
In either case, more general strategic interactions may be considered, and
one need not restrict their attention to analogues of classical games or in
identifying a ``quantum advantage.''

For example, quantum interactive proof systems of various sorts, as well as
many quantum cryptographic scenarios and primitives, can be viewed as quantum
games.
Another example is quantum communication, which can be modeled as a game
in which one player attempts to transmit a quantum state to another, while a
third player representing an adversarial noise model attempts to disrupt the
transmission.
We do not offer any specific suggestions in this paper, but it is not
unreasonable to imagine that quantum games having social or economic
applications could be discovered.

We will now summarize the definition of quantum games we adopt, beginning with
the comparatively simple non-interactive setting and then moving on to the more
general interactive setting.
For the sake of simplicity and exposition in this introduction, we will
restrict our attention to games in which there are just two players: Alice and
Bob.
The definitions are easily extended to any finite number of players, as is
done in the main text.

\subsubsection*{Non-interactive quantum games}

In a (two-player) \emph{non-interactive quantum game}, the players Alice and
Bob each hold a quantum system, represented by a register of a predetermined
size: Alice holds $\reg{X}$ and Bob holds $\reg{Y}$.
They must each \emph{independently} prepare in the register they hold a
quantum state: Alice prepares a quantum state represented by a density
operator $\rho$ and Bob prepares a state represented by $\sigma$.
Just like in the standard \emph{non-cooperative} setting of classical game
theory, Alice and Bob are assumed to be unable to correlate their state
preparations with one another.\footnote{%
  It is interesting to consider meaningful ways in which this assumption may be
  relaxed or dropped, but the simplest and most direct quantum extension
  of classical game theory beings with this assumption of independence.}
The registers $\reg{X}$ and $\reg{Y}$ are sent to a referee, who performs a
joint measurement on the pair $(\reg{X},\reg{Y})$.
Here, when we refer to a measurement, we mean a general quantum
measurement, often called a POVM (positive operator valued measure), having any
finite and non-empty set of measurement outcomes.
The outcome of the measurement is assumed to determine a real number payoff for
each player.
We note explicitly that Jinshan Wu \cite{Wu2004a,Wu2004b} has proposed and
analyzed an equivalent definition of non-interactive quantum games to this
one.

In order to formally describe a non-interactive quantum game, one must
specify the referee's measurement together with the payoff functions for each
player.
As will be explained later, when it suffices to specify each player's
\emph{expected} payoff, given any choice of states the players may select,
the referee may be described by a collection of Hermitian matrices, one for
each player.

One may observe that the standard notion of a classical game in normal form is
easily represented within this framework by defining the referee so that it
first measures the registers $\reg{X}$ and $\reg{Y}$ with respect to the
standard bases of the associated spaces, and then assigns payoffs in a
completely classical manner.

A non-interactive quantum game can, up to a discretization, also be viewed as a
classical game, where the players send the referee classical descriptions of
their chosen density operators and the referee performs the required
calculation to determine their payoffs, but the normal form description of this
new classical game will, naturally, be exponentially larger than the
description of the original quantum game.

\subsubsection*{Interactive quantum games}

Quantum games in which players can process and exchange quantum
information with a referee over the course of multiple rounds of interaction
may also be considered.

For example, the referee could prepare registers $\reg{X}$ and~$\reg{Y}$ in a
joint quantum state, send $\reg{X}$ to Alice and $\reg{Y}$ to Bob, allowing
them to transform these registers as they choose, and then measure the pair
$(\reg{X},\reg{Y})$ upon receiving them back from Alice and Bob.
In such a game, Alice and Bob therefore each play a quantum channel, with their
payoffs again being determined by the outcome of the referee's measurement.
The framework of Eisert, Wilkens, and Lewenstein falls into this category,
provided that the players are permitted to play channels and not just unitary
operations.

Zhang~\cite{Zhang2012} introduced and studied a related model, where
the referee distributes a quantum state to the players, who then effectively
choose local measurements as their stratgies.
Through this model Zhang identified interesting aspects of so-called
\emph{correlated equilibria} in quantum games.

More generally, an \emph{interactive quantum game} may involve an interaction
between the referee and the players over the course of multiple rounds, as
suggested by Figure~\ref{figure:interactive-game}.
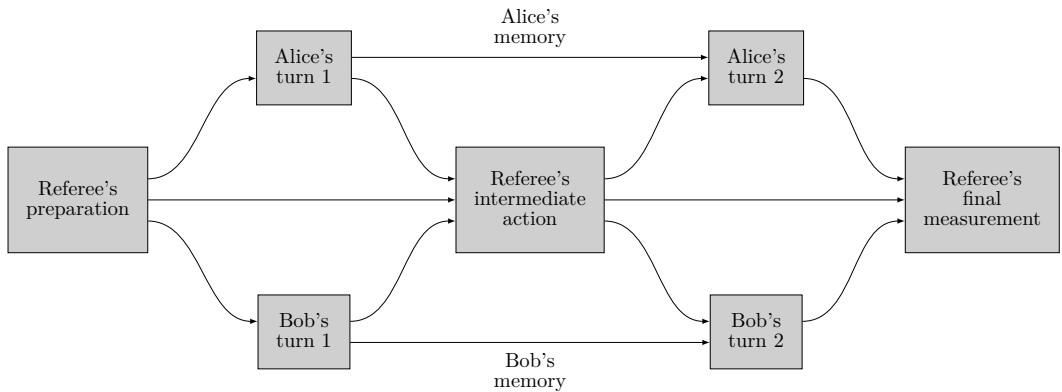
\begin{figure}
  \begin{center}
    \scalebox{0.7}{%
      \begin{minipage}{19.9cm}
        \begin{tikzpicture}[
            scale = 0.5,
            turn/.style={draw, minimum height=14mm, minimum width=10mm,
              fill = ChannelColor, text=ChannelTextColor},
            invisibleturn/.style={minimum height=14mm, minimum width=10mm},
            bigturn/.style={draw, minimum height=20mm, minimum width=20mm,
              fill = ChannelColor, text=ChannelTextColor},
            measure/.style={draw, minimum width=7mm, minimum height=7mm,
              fill = ChannelColor},
            >=latex]
          
          \node (A1) at (-8.5,5) [turn] {%
            \begin{tabular}{c}
              Alice's\\[-1mm]turn 1
            \end{tabular}
          };

          \node (A2) at (8.5,5) [turn] {%
            \begin{tabular}{c}
              Alice's\\[-1mm]turn 2
            \end{tabular}
          };      

          \node (B1) at (-8.5,-5) [turn] {%
            \begin{tabular}{c}
              Bob's\\[-1mm]turn 1
            \end{tabular}
          };

          \node (B2) at (8.5,-5) [turn] {%
            \begin{tabular}{c}
              Bob's\\[-1mm]turn 2
            \end{tabular}
          };

          \node (R0) at (-17,0) [bigturn] {%
            \begin{tabular}{c}
              Referee's\\[-1mm]
              preparation
            \end{tabular}
          };

          \node (R1) at (0,0) [bigturn] {%
            \begin{tabular}{c}
              Referee's\\[-1mm]
              intermediate\\[-1mm]
              action
            \end{tabular}
          };

          \node (R2) at (17,0) [bigturn] {%
            \begin{tabular}{c}
              Referee's\\[-1mm]
              final\\[-1mm]
              measurement
            \end{tabular}
          };
          
          \draw[->] ([yshift=4mm]A1.east) -- ([yshift=4mm]A2.west)
          node [above, midway] {%
            \begin{tabular}{c}
              Alice's\\[-1mm]memory
          \end{tabular}};
          
          \draw[->] ([yshift=-4mm]A1.east) .. controls +(right:20mm) and 
          +(left:20mm) .. ([yshift=8mm]R1.west) node [right, pos=0.4]
          {};
          
          \draw[->] ([yshift=-4mm]A2.east) .. controls +(right:20mm) and 
          +(left:20mm) .. ([yshift=8mm]R2.west) node [right, pos=0.4] 
          {};
          
          \draw[->] ([yshift=8mm]R0.east) .. controls +(right:20mm) and 
          +(left:20mm) .. ([yshift=-4mm]A1.west) node [left, pos=0.6]
          {};
          
          \draw[->] ([yshift=8mm]R1.east) .. controls +(right:20mm) and 
          +(left:20mm) .. ([yshift=-4mm]A2.west) node [left, pos=0.6]
          {};

          \draw[->] (R0.east) -- (R1.west) node [below, midway] {};
          
          \draw[->] (R1.east) -- (R2.west) node [below, midway] {};
          
          \draw[->] ([yshift=-4mm]B1.east) -- ([yshift=-4mm]B2.west)
          node [below, midway] {%
            \begin{tabular}{c}
              Bob's\\[-1mm]memory
          \end{tabular}};
          
          \draw[->] ([yshift=4mm]B1.east) .. controls +(right:20mm) and 
          +(left:20mm) .. ([yshift=-8mm]R1.west) node [right, pos=0.4] {};
          
          \draw[->] ([yshift=4mm]B2.east) .. controls +(right:20mm) and 
          +(left:20mm) .. ([yshift=-8mm]R2.west) node [right, pos=0.4] {};
          
          \draw[->] ([yshift=-8mm]R0.east) .. controls +(right:20mm) and 
          +(left:20mm) .. ([yshift=4mm]B1.west) node [left, pos=0.6] {};
          
          \draw[->] ([yshift=-8mm]R1.east) .. controls +(right:20mm) and 
          +(left:20mm) .. ([yshift=4mm]B2.west) node [left, pos=0.6] {};
          
        \end{tikzpicture}
    \end{minipage}}
  \end{center}
  \caption{An illustration of a game between Alice and Bob, run by a referee,
    in which Alice and Bob each receive and transmit quantum information twice,
    potentially keeping a quantum memory between the two turns.
    Each arrow represents a quantum register, which could be of any fixed
    size (including the possibility of trivial registers, which are equivalent
    to nothing being transmitted).
    Games involving any finite number of rounds of interaction may be
    considered.}
  \label{figure:interactive-game}
\end{figure}
In this setting it is natural to assume that Alice and Bob each have their own
private quantum memory, which they utilize if it is to their advantage.

The actions of the players in such a game may be represented through the
framework alternatively known as the
\emph{quantum strategies} framework \cite{GutoskiW2007} and the
\emph{quantum combs} framework \cite{ChiribellaDP2008,ChiribellaDP2009}.
This framework, which will be described in greater detail in the next
section, allows the actions of any one player over the course of multiple
rounds of interaction, accounting for the possibility of a quantum memory, to
be faithfully represented by a single positive semidefinite matrix that
satisfies a finite collection of affine linear constraints.
Thus, the sets of strategies available to the players are convex and compact,
and one may efficiently optimize real-valued linear functions defined on these
sets through the paradigm of semidefinite programming.


Similar to non-interactive quantum games, interactive quantum games are
formally expressed by specifying the referee's actions, including state
preparations, channels, and measurements, along with payoff functions of the
possible measurement outcomes corresponding to each player.
Once again, when it is sufficient to describe the expected payoff for each
player, given a specification of their strategies, a referee in an interactive
quantum game may be specified by a list of Hermitian matrices, one for each
player, as will be explained.


\subsection*{Our results and contributions}

We prove, as our main technical result, that the problem of computing an
approximate Nash equilibrium in any interactive game of the form described
above, given an explicit matrix representation of the referee, is contained in
the complexity class PPAD.
As this problem includes non-interactive classical games as a special case, it
follows that this problem is complete for PPAD
\cite{DaskalakisGP2009,ChenDT2009}.

There is a sense in which this result is not unexpected; prior work on the
complexity of computing approximate Nash equilibria, and more generally on the
complexity of computing fixed points of different classes of continuous maps,
suggests that approximations of Nash equilibria in a wide variety of games
should be contained in PPAD
\cite{Papadimitriou1994,DaskalakisGP2009,ChenDT2009,EtessamiY2010}.
The principal challenge that arises in the setting of interactive quantum games
is that, although one may efficiently optimize over individual player's
strategies through semidefinite programming, closed form expressions of
these optimizations are not known to exist.

To confront this challenge, we consider a fairly general convex form of
Nash's notion of a \emph{gain function}---and then we fight fire with fire, so
to speak, using semidefinite programming to approximate continuous functions
that arise through this formulation.
Possibly our methodology for handling this issue will be of independent
interest.

Although it is not an essential aspect of our proof, we also make use of the
elegant notion of a \emph{discrete Wigner representation} of a quantum state,
which is convenient within the proof.
Although discrete Wigner representations have been investigated in the theory
of quantum information
(see, for instance, \cite{GibbonsHW2004} and \cite{Gross2006}), we are not
aware that they have been used previously in quantum complexity theory, and we
feel that they offer a convenient tool that might be useful in other contexts.

Beyond this main technical result, we hope that this paper may serve as a
suggestion that quantum game theory is worthy of a second look.
We believe that the general definition of quantum games we have reiterated is
well motivated by the theory of quantum information, and can provide a basic
foundation through which quantum game theory and its complexity theoretic
aspects may be investigated.
In the conclusion of this paper we mention several open problems and
research directions concerning quantum game theory that may be of interest.


\section{Technical preliminaries}

This section summarizes technical concepts required later in the paper.
The first three subsection that follow discuss aspects of the
\emph{barycentric subdivision} of a simplex, the
\emph{discrete Wigner representation} of quantum states, and
the \emph{quantum strategies/combs framework}, respectively.
In the last subsection we present the standard definition of the complexity
class $\mathrm{PPAD}$ and reference a theorem of Etessami and
Yannakakis~\cite{EtessamiY2010} that establishes the containment of a
certain computational fixed-point problem in $\mathrm{PPAD}$, to which we
will reduce the problem of finding an approximate Nash equilibrium of a
quantum game.

It will be assumed throughout the paper that the reader is familiar with basic
notions of computational complexity \cite{AroraB2009} and quantum information
\cite{NielsenC2000,Wilde2017,Watrous2018}.
Hereafter we will take $\Sigma = \{0,1\}$ to denote the binary alphabet.


\subsection{Barycentric subdivision of a simplex}

Suppose that a positive integer $n$ is given, and consider a simplex in an
$n$-dimensional space having vertices $\{u_1,\ldots,u_n\}$.
The \emph{barycentric subdivision} of such a simplex is a division of it into
$n!$ simplices in the following way.

First, with each nonempty subset $A\subseteq\{1,\ldots,n\}$ we define a point
\begin{equation}
  v_A = \frac{1}{\abs{A}}\sum_{k\in A} u_k,
\end{equation}
which is the uniform convex combination, or \emph{barycenter}, of the vertices
of the original simplex labeled by elements of $A$.
For example, $v_{\{k\}} = u_k$ for each $k\in\{1,\ldots,n\}$, while
$v_{\{1,\ldots,n\}}$ is the true barycenter of the original simplex.
Figure~\ref{fig:barycentric-1} illustrates the barycentric subdivision for a
simplex when $n=3$.
\begin{figure}[t]
  \begin{center}
  \begin{tikzpicture}[scale=4,>=latex]]

    \node (1) at (0,1) {};
    \node (2) at (-0.8660,-0.5) {};
    \node (3) at (0.8660,-0.5) {};

    \node (12) at (barycentric cs:1=1,2=1,3=0) {};
    \node (13) at (barycentric cs:1=1,2=0,3=1) {};
    \node (23) at (barycentric cs:1=0,2=1,3=1) {};

    \node (123) at (barycentric cs:1=1,2=1,3=1) {};

    \fill[black!10] (1.center) -- (12.center) -- (123.center) -- (1.center);

    \node at ([xshift=-1,yshift=-2]barycentric cs:1=1,12=1,123=1) {%
      \footnotesize $(1,2,3)$};

    \node at ([xshift=1,yshift=-2]barycentric cs:1=1,13=1,123=1) {%
      \footnotesize $(1,3,2)$};

    \node at ([xshift=1,yshift=1]barycentric cs:2=1,12=1,123=1) {%
      \footnotesize $(2,1,3)$};

    \node at ([xshift=-1,yshift=1]barycentric cs:3=1,13=1,123=1) {%
      \footnotesize $(3,1,2)$};

    \node at ([xshift=1,yshift=0]barycentric cs:2=1,23=1,123=1) {%
      \footnotesize $(2,3,1)$};

    \node at ([xshift=-1,yshift=0]barycentric cs:3=1,23=1,123=1) {%
      \footnotesize $(3,2,1)$};
    
    \draw[fill] (1.center) circle (0.5pt);
    \draw[fill] (2.center) circle (0.5pt);
    \draw[fill] (3.center) circle (0.5pt);
    \draw[fill] (12.center) circle (0.5pt);
    \draw[fill] (23.center) circle (0.5pt);
    \draw[fill] (13.center) circle (0.5pt);
    \draw[fill] (123.center) circle (0.5pt);
    
    \draw (1.center) -- (12.center);
    \draw (1.center) -- (13.center);
    \draw (1.center) -- (123.center);
    \draw (2.center) -- (12.center);
    \draw (2.center) -- (23.center);
    \draw (2.center) -- (123.center);
    \draw (3.center) -- (13.center);
    \draw (3.center) -- (23.center);
    \draw (3.center) -- (123.center);
    
    \draw (12.center) -- (123.center);
    \draw (13.center) -- (123.center);
    \draw (23.center) -- (123.center);
    
    \node at ([yshift=5pt]1.center) {$v_{\{1\}} = u_1$};
    \node at ([xshift=-4pt,yshift=-4pt]2.center) {$v_{\{2\}} = u_2$};
    \node at ([xshift=4pt,yshift=-4pt]3.center) {$v_{\{3\}} = u_3$};
    \node at ([xshift=-6pt,yshift=3pt]12.center) {$v_{\{1,2\}}$};
    \node at ([xshift=6pt,yshift=3pt]13.center) {$v_{\{1,3\}}$};
    \node at ([yshift=-4pt]23.center) {$v_{\{2,3\}}$};
    
    \node (barycenterlabel) at (1.25,0.2) {$v_{\{1,2,3\}}$};
    \draw[->] (barycenterlabel.west) -- (123.east);
    \node (counterbalance) at (-1.25,0) {\phantom{$\{1,2,3\}$}};
    
  \end{tikzpicture}
  \end{center}
  \caption{The barycentric subdivision of a simplex with three vertices
    $u_1$, $u_2$, and $u_3$.
    The shaded region indicates one of the $3!=6$ simplices formed by this
    subdivision: it is the one with vertices
    $\{v_{\{1\}}, v_{\{1,2\}}, v_{\{1,2,3\}}\}$, which is naturally
    identified with the identity permutation $\pi = (1,2,3)$.
  }
  \label{fig:barycentric-1}
\end{figure}
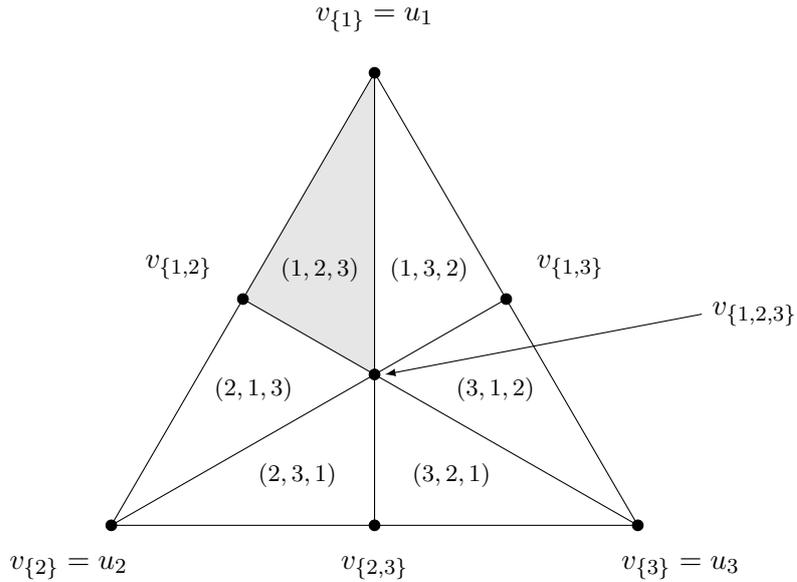

Next, by drawing an edge between $v_A$ and $v_B$ if and only if $A\subset B$
or $B\subset A$ (proper containments), we obtain a division of the original
simplex into $n!$ new simplices, one for each possible chain
\begin{equation}
  \label{eq:subset-chain}
  A_1 \subset \cdots \subset A_n
\end{equation}
of subsets of $\{1,\ldots,n\}$.
There are $n!$ such chains and they may be placed in correspondence with
the symmetric group $S_n$.
To be precise, for any fixed ordering $(k_1,\ldots,k_n)$ of the set
$\{1,\ldots,n\}$, we associate the chain \eqref{eq:subset-chain} with the
permutation $\pi\in S_n$ satisfying
\begin{equation}
  A_j = \{k_{\pi(1)},\ldots,k_{\pi(j)}\}
\end{equation}
for every $j\in \{1,\ldots, n\}$.
Thus, the simplices in the subdivision are identified with elements of $S_n$.

The barycentric subdivision may naturally be applied iteratively within
the simplices constructed by the subdivision.
For example, Figure~\ref{fig:barycentric-2} illustrates the barycentric
subdivision of just the shaded simplex illustrated in
Figure~\ref{fig:barycentric-1}.
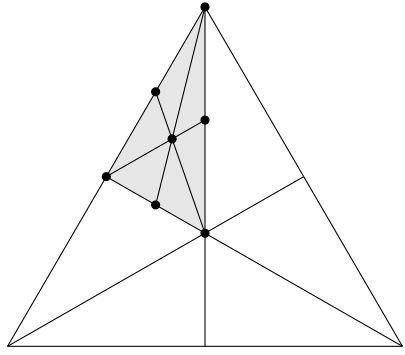
\begin{figure}[t]
\begin{center}
  \begin{tikzpicture}[scale=3,>=latex]]

    \node (1) at (0,1) {};
    \node (2) at (-0.8660,-0.5) {};
    \node (3) at (0.8660,-0.5) {};

    \node (12) at (barycentric cs:1=1,2=1,3=0) {};
    \node (13) at (barycentric cs:1=1,2=0,3=1) {};
    \node (23) at (barycentric cs:1=0,2=1,3=1) {};
    \node (123) at (barycentric cs:1=1,2=1,3=1) {};
    \node (1_12) at (barycentric cs:1=1,12=1) {};
    \node (12_123) at (barycentric cs:12=1,123=1) {};
    \node (1_123) at (barycentric cs:1=1,123=1) {};
    \node (1_12_123) at (barycentric cs:1=1,12=1,123=1) {};

    \fill[black!10] (1.center) -- (12.center) -- (123.center) -- (1.center);

    \draw[fill] (1_12.center) circle (0.5pt);
    \draw[fill] (12_123.center) circle (0.5pt);
    \draw[fill] (1_123.center) circle (0.5pt);
    \draw[fill] (1_12_123.center) circle (0.5pt);
    \draw[fill] (1.center) circle (0.5pt);
    \draw[fill] (12.center) circle (0.5pt);
    \draw[fill] (123.center) circle (0.5pt);

    \draw (1.center) -- (12.center);
    \draw (1.center) -- (13.center);
    \draw (1.center) -- (123.center);
    \draw (2.center) -- (12.center);
    \draw (2.center) -- (23.center);
    \draw (2.center) -- (123.center);
    \draw (3.center) -- (13.center);
    \draw (3.center) -- (23.center);
    \draw (3.center) -- (123.center);
    \draw (12.center) -- (123.center);
    \draw (13.center) -- (123.center);
    \draw (23.center) -- (123.center);
    \draw (1.center) -- (1_12_123.center);
    \draw (1_12.center) -- (1_12_123.center);
    \draw (1_123.center) -- (1_12_123.center);
    \draw (12.center) -- (1_12_123.center);
    \draw (123.center) -- (1_12_123.center);
    \draw (12_123.center) -- (1_12_123.center);
    
    
    
       
       
    
    
    
   
  \end{tikzpicture}
\end{center}
\caption{The barycentric subdivision applied to the simplex shaded gray in
  Figure~\ref{fig:barycentric-1}.}
\label{fig:barycentric-2}
\end{figure}
Hereafter we shall assume that the initial simplex is the standard
simplex $\Delta_n$, so that $u_1,\ldots,u_n$ are elementary unit vectors
(or, equivalently, standard basis vectors).
With this assumption in place, we define a sequence of finite subsets
of the standard unit simplex
\begin{equation}
  \B_n^0 \subset \B_n^1 \subset \B_n^2 \subset \cdots \subset \Delta_n;
\end{equation}
$\B_n^0$ contains the $n$ vertices of the simplex $\Delta_n$,
$\B_n^1$ contains the $2^n-1$ points corresponding to the nonempty subsets
of $\{1,\ldots,n\}$ when the barycentric subdivision has been applied a single
time, and in general $\B_n^r$ denotes the set of vertices after the $r$-th
level subdivision has been performed.

For any function $f:\Delta_n\rightarrow\Delta_n$, and any nonnegative integer
$r$, the \emph{$r$-th level barycentric approximation to $f$} is the function
$g_r:\Delta_n\rightarrow\Delta_n$ defined in the following way.
Each point $v\in\Delta_{n}$ may be expressed uniquely as a convex combination
\begin{equation}
  v = \lambda_1 v_1 + \cdots + \lambda_m v_m
\end{equation}
of distinct vertices $v_1,\ldots,v_m \in \B_n^r$, all contained in the same
$r$-th level simplex (which therefore implies $m\leq n$).
One then defines
\begin{equation}
  g_r(v) = \lambda_1 f(v_1) + \cdots + \lambda_m f(v_m).
\end{equation}

Various computations involving the $r$-th level barycentric subdivision may be
performed efficiently and exactly through rational number computations for $r$
being polynomial in the size of the problem being considered, but we will not
have a need to explicitly refer to these computations.
We will make use of the well known fact that any two points $u$ and $v$
contained in the same simplex constructed at the $r$-th level of the
barycentric subdivision must satisfy
\begin{equation}
  \norm{u - v}_2 \leq \Bigl(1 - \frac{1}{n+1}\Bigr)^r,
\end{equation}
with the norm being the Euclidean norm on $\real^n$.
A proof of this fact may be found in many texts on algebraic topology,
including in \cite{Bredon1993} where it appears as Lemma~17.3.


\subsection{Discrete Wigner representation}
\label{sec:discrete-Wigner}

Throughout this subsection we will take $n$ to be an odd positive integer,
and let us rename the elements of the standard basis
$\{\ket{1},\ldots,\ket{n}\}$ as $\{\ket{a}\,:\,a\in\mathbb{Z}_n\}$ by
taking each index modulo $n$.
In general, we shall interpret expressions inside of bras and kets as referring
to modulo $n$ arithmetic.

It is helpful to begin with the definition of the
\emph{discrete Weyl operators}.
First define
\begin{equation}
  X = \sum_{a\in\mathbb{Z}_n} \ket{a + 1}\bra{a}
  \quad\text{and}\quad
  Z = \sum_{a\in\mathbb{Z}_n} \omega_n^a \tsp\ket{a}\bra{a}
\end{equation}
for $\omega_n = \exp(2\pi i/n)$ denoting the first principal $n$-th root of
unity.
The discrete Weyl operators
\begin{equation}
  \{W_{a,b}\,:\,a,b\in\mathbb{Z}_n\}\subset\Unitary(\complex^n),
\end{equation}
as we will define them, are then given by
\begin{equation}
  W_{a,b} = X^a Z^b
\end{equation}
for every $a,b\in\mathbb{Z}_n$.
The discrete Weyl operators form an orthogonal basis for the vector
space~$\Lin(\complex^n)$.

Next, define an operator $T\in\Lin(\complex^n)$ as
\begin{equation}
  T = \sum_{a\in\mathbb{Z}_n}\ket{-a}\bra{a}.
\end{equation}
For example, in dimension $n=5$ this operator may be expressed in matrix form
as
\begin{equation}
  T =
  \begin{pmatrix}
    0 & 0 & 0 & 1 & 0\\
    0 & 0 & 1 & 0 & 0\\
    0 & 1 & 0 & 0 & 0\\
    1 & 0 & 0 & 0 & 0\\
    0 & 0 & 0 & 0 & 1
  \end{pmatrix}.
\end{equation}
We then define
\begin{equation}
  V_{a,b} = W_{a,b} T W_{a,b}^{\ast}
\end{equation}
for every $a,b\in\mathbb{Z}_n$, and consider the collection
\begin{equation}
  \label{eq:Wigner-operators}
  \bigl\{V_{a,b} \,:\, a,b\in\mathbb{Z}_n\bigr\}.
\end{equation}
One may observe that $T$ is unitary, Hermitian, and, by the
assumption that $n$ is odd, has unit trace, and therefore the same is true
for every operator in the collection \eqref{eq:Wigner-operators}.

Next let us verify that the collection \eqref{eq:Wigner-operators} is
orthogonal, with respect to the usual (Hilbert--Schmidt) inner product on
operators.
For any choice of $a,b,c,d\in\mathbb{Z}_n$, one may verify directly that
\begin{equation}
  \bigip{V_{a,b}}{V_{c,d}}
  = \bigip{T}{V_{c-a,d-b}}.
\end{equation}
Also observe the following expression for the diagonal entries of the operator
$T V_{a,b}$:
\begin{equation}
  \bigbra{c}TV_{a,b}\bigket{c}
  = \bigbra{c}TW_{a,b}T W_{a,b}^{\ast}\bigket{c}
  = \begin{cases}
    0 & a \not= 0\\
    \omega_n^{-2bc} & a = 0.
  \end{cases}
\end{equation}

\noindent
Noting the expression
\begin{equation}
  \sum_{c\in\mathbb{Z}_n}
  \omega_n^{-2bc}
  = \begin{cases}
    n & b=0\\
    0 & b\not=0,
  \end{cases}
\end{equation}
where again we have used the assumption that $n$ is odd, we conclude that
\begin{equation}
  \bigip{T}{V_{a,b}} =
  \begin{cases}
    n & (a,b) = (0,0)\\
    0 & (a,b) \not= (0,0).
  \end{cases}
\end{equation}
The collection \eqref{eq:Wigner-operators} is therefore orthogonal.

At this point we have no further need to refer to modulo $n$ arithmetic, so let
us assume that the elements of the collection
\eqref{eq:Wigner-operators} have been renamed as $\{V_1,\ldots,V_{n^2}\}$,
with respect to any sensible way of doing this.
The key property of this collection is that each $V_k$ is unitary, Hermitian,
and has trace equal to 1, and that the collection is orthogonal.

Finally, define an affine linear map of the form
$\psi:\Herm(\complex^n) \rightarrow \real^{n^2}$ as
\begin{equation}
  \psi(H) = \frac{1}{n(n+1)} \sum_{k=1}^{n^2} (\ip{V_k}{H} + 1)\, \ket{k}
\end{equation}
for every $H\in\Herm(\complex^n)$.
The inverse of this mapping is given by
\begin{equation}
  \psi^{-1}(v) = (n+1) \sum_{k = 1}^{n^2} v_k V_k - \I_n
\end{equation}
for every $v\in\real^{n^2}$.
This mapping defines a \emph{discrete Wigner representation} of quantum states;
each density operator $\rho\in\Density(\complex^n)$ is represented by the
vector
\begin{equation}
  v = \psi(\rho) \in \Delta_{n^2}.
\end{equation}
The inclusion of the vector $v = \psi(\rho)$ in the unit simplex follows
from two observations, the first being that
$\tr(\psi^{-1}(v)) = 1$ if and only if $v_1 + \cdots + v_{n^2} = 1$, and the
second being that $\ip{V_k}{\rho}\in[-1,1]$ by virtue of the fact that $V_k$ is
unitary and Hermitian, for each $k\in\{1,\ldots,n^2\}$.

It should be noted that, except in the trivial case $n = 1$, the
inclusion $\psi(\Density(\complex^n)) \subset \Delta_{n^2}$ is proper;
only a subset of the vectors in the standard simplex represent a valid
density operator, others represent unit-trace Hermitian operators having
negative eigenvalues.

\subsection{The quantum strategies framework}
\label{sec:strategies}

We now summarize aspects of the quantum strategies/combs framework
\cite{GutoskiW2007,ChiribellaDP2008,ChiribellaDP2009}, hereafter be referred to
as the \emph{quantum strategies framework} in this paper, that are required for
our main result.
This framework provides a convenient way of describing and characterizing the
actions of agents that interact and exchange quantum information with one
another over the course of multiple rounds.

\begin{figure}
  \begin{center}
    \footnotesize
    \begin{tikzpicture}[scale=0.35, 
        turn/.style={draw, minimum height=14mm, minimum width=10mm,
          fill = ChannelColor, text=ChannelTextColor},
        measure/.style={draw, minimum width=7mm, minimum height=7mm,
          fill = ChannelColor},
        invisible/.style={minimum height=0mm, minimum width=10mm},
        >=latex]
      
      \node (V1) at (-8,4) [turn] {$\Phi_1$};
      \node (V2) at (2,4) [turn] {$\Phi_2$};
      \node (V3) at (12,4) [turn] {$\Phi_3$};
           
      \node (P0) at (-13,-2) [invisible] {};
      \node (P1) at (-3,-2) [invisible] {};
      \node (P2) at (7,-2) [invisible] {};
      \node (P3) at (17,-2) [invisible] {};

      \draw[->] ([yshift=4mm]V1.east) -- ([yshift=4mm]V2.west)
      node [above, midway] {$\reg{Z}_1$};
      
      \draw[->] ([yshift=4mm]V2.east) -- ([yshift=4mm]V3.west)
      node [above, midway] {$\reg{Z}_2$};
      
      
      \draw[->] ([yshift=-4mm]V1.east) .. controls +(right:20mm) and 
      +(left:20mm) .. ([yshift=4mm]P1.west) node [right, pos=0.5] {$\reg{Y}_1$};
      
      \draw[->] ([yshift=-4mm]V2.east) .. controls +(right:20mm) and 
      +(left:20mm) .. ([yshift=4mm]P2.west) node [right, pos=0.5] {$\reg{Y}_2$};
      
      \draw[->] ([yshift=-4mm]V3.east) .. controls +(right:20mm) and 
      +(left:20mm) .. ([yshift=4mm]P3.west) node [right, pos=0.5] {$\reg{Y}_3$};
      
      \draw[->] ([yshift=4mm]P0.east) .. controls +(right:20mm) and 
      +(left:20mm) .. ([yshift=-4mm]V1.west) node [left, pos=0.5] {$\reg{X}_1$};
      
      \draw[->] ([yshift=4mm]P1.east) .. controls +(right:20mm) and 
      +(left:20mm) .. ([yshift=-4mm]V2.west) node [left, pos=0.5] {$\reg{X}_2$};
      
      \draw[->] ([yshift=4mm]P2.east) .. controls +(right:20mm) and 
      +(left:20mm) .. ([yshift=-4mm]V3.west) node [left, pos=0.5] {$\reg{X}_3$};
            
    \end{tikzpicture}
  \end{center}
  \caption{
    The actions of an agent, or a \emph{strategy}, in a three-round interaction
    may be described by a network of three channels.
    Time goes from left to right: first the register $\reg{X}_1$ is received
    and fed into the channel $\Phi_1$, which produces $\reg{Y}_1$ and
    $\reg{Z}_1$ as output, with $\reg{Y}_1$ being sent to another agent and
    $\reg{Z}_1$ representing a memory register that is retained by the agent
    being described.
    Then $\reg{X}_2$ is received, both $\reg{X}_2$ and the memory register
    $\reg{Z}_1$ are fed into the second channel $\Phi_2$, and so on.}
  \label{fig:strategy}
\end{figure}
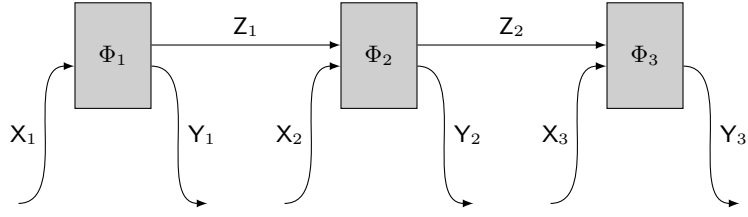

Consider an agent that engages in an interaction involving the exchange of
quantum information with one or more other agents.
Let us suppose, in particular, that the agent being considered first receives
a register $\reg{X}_1$, then sends a register $\reg{Y}_1$, then receives
$\reg{X}_2$, then sends $\reg{Y}_2$, and so on, with its role in the
hypothetical interaction concluding after it receives $\reg{X}_r$ and then
sends $\reg{Y}_r$.
It is to be assumed that the agent may store quantum information between the
rounds of interaction.
Figure~\ref{fig:strategy} depicts the actions of an agent of this sort in the
case $r=3$.
Hereafter we will refer to a network of this form as a \emph{strategy} for the
agent being described.

In the quantum strategies framework, strategies of this sort are represented by
the \emph{Choi representation} of the network.
To be more precise, the network is considered as a single quantum channel
\begin{equation}
  \label{eq:compound-channel}
  \Phi\in\Channel(\X_1\otimes\cdots\otimes\X_r,\Y_1\otimes\cdots\otimes\Y_r),
\end{equation}
with $(\reg{X}_1,\ldots,\reg{X}_r)$ collectively forming the input to this
channel and $(\reg{Y}_1,\ldots,\reg{Y}_r)$ forming the output, and the
Choi representation $J(\Phi)$ is taken as a representation of the strategy.
In general, the Choi representation of a channel taking the form
$\Phi\in\Channel(\X,\Y)$ is given by
\begin{equation}
  J(\Phi) = \sum_{a,b} \Phi\bigl( \ket{a}\bra{b} \bigr) \otimes \ket{a}\bra{b},
\end{equation}
where $a$ and $b$ range over all classical states (or, equivalently, standard
basis elements) of the input space~$\X$, and therefore for $\Phi$ taking the
form \eqref{eq:compound-channel} we have
\begin{equation}
  J(\Phi) \in \Lin(\Y_1\otimes\cdots\otimes\Y_r \otimes 
  \X_1\otimes\cdots\otimes\X_r).
\end{equation}

Not every channel of the form \eqref{eq:compound-channel} can be obtained by
composing channels $\Phi_1,\ldots,\Phi_r$ in the manner just described;
a given channel might not respect the ``time ordering'' in which
each register $\reg{Y}_k$ is produced prior to the registers
$\reg{X}_{k+1},\ldots,\reg{X}_r$ being received.
A necessary and sufficient condition for a channel of the form
\eqref{eq:compound-channel} to decompose into a network of channels
$\Phi_1,\ldots,\Phi_r$ is that its Choi representation is positive
semidefinite,
\begin{equation}
  J(\Phi) \in
  \Pos(\Y_1\otimes\cdots\otimes\Y_r\otimes\X_1\otimes\cdots\otimes\X_r),
\end{equation}
and satisfies a collection of affine linear constraints:
\begin{equation}
  \label{eq:SDP-strategy-constraints}
  \begin{aligned}
    \tr_{\Y_r} \bigl( J(\Phi) \bigr) &= X_{r-1} \otimes \I_{\X_r}\\
    \tr_{\Y_{r-1}} (X_{r-1}) &= X_{r-2} \otimes \I_{\X_{r-1}}\\
    & \;\;\vdots\\
    \tr_{\Y_2} (X_2) &= X_{1} \otimes \I_{\X_2}\\
    \tr_{\Y_1} (X_1) &= \I_{\X_1}
  \end{aligned}
\end{equation}
for $X_1,\ldots,X_{r-1}$ being operators having sizes required by the
equalities.
By the assumption that $J(\Phi)$ is positive semidefinite, the operators
$X_1,\ldots,X_{r-1}$ (if they exist) must be positive semidefinite, and so we
may write
\begin{equation}
  \begin{gathered}
    X_{r-1} \in \Pos(\Y_1\otimes\cdots\otimes\Y_{r-1}
    \otimes\X_1\otimes\cdots\otimes\X_{r-1})\\
    \vdots\\
    X_1\in\Pos(\Y_1\otimes\X_1).
  \end{gathered}
\end{equation}
(The operators $X_{r-1},\ldots,X_1$ happen to be the Choi representations
of the strategies obtained by ``truncating'' the strategy described by the
channels $\Phi_1,\ldots,\Phi_r$, assuming the final memory register is tracing
out in each case.)
It may be noted that, in the case $r=1$, the usual necessary and sufficient
conditions for a map to describe a quantum channel are recovered.

\begin{figure}
  \begin{center} \small
    \begin{tikzpicture}[scale=0.35, 
        turn/.style={draw, minimum height=14mm, minimum width=10mm,
          fill = ChannelColor, text=ChannelTextColor},
        invisibleturn/.style={minimum height=14mm, 
          minimum width=10mm},
        measure/.style={draw, minimum width=7mm, minimum height=7mm,
          fill = ChannelColor},
        >=latex]
      
      \node (V1) at (-8,6) [turn] {$\Phi_1$};
      \node (V2) at (2,6) [turn] {$\Phi_2$};
      \node (V3) at (12,6) [turn] {$\Phi_3$};
      
      \node (M) at (22,0) [measure] {};
      
      \node (P0) at (-13,0) [turn] {$\Psi_1$};
      \node (P1) at (-3,0) [turn] {$\Psi_2$};
      \node (P2) at (7,0) [turn] {$\Psi_3$};
      \node (P3) at (17,0) [turn] {$\Psi_4$};
      
      \node[draw, minimum width=5mm, minimum height=3.5mm, fill=ReadoutColor]
      (readout) at (M) {};
      
      \draw[thick] ($(M)+(0.3,-0.15)$) arc (0:180:3mm);
      \draw[thick] ($(M)+(0.2,0.2)$) -- ($(M)+(0,-0.2)$);
      \draw[fill] ($(M)+(0,-0.2)$) circle (0.5mm);
      
      \draw[->] ([yshift=4mm]V1.east) -- ([yshift=4mm]V2.west)
      node [above, midway] {$\reg{Z}_1$};
      
      \draw[->] ([yshift=4mm]V2.east) -- ([yshift=4mm]V3.west)
      node [above, midway] {$\reg{Z}_2$};
      
      \draw[->] ([yshift=-4mm]V1.east) .. controls +(right:20mm) and 
      +(left:20mm) .. ([yshift=4mm]P1.west) node [right, pos=0.4] 
      {$\reg{Y}_1$};
      
      \draw[->] ([yshift=-4mm]V2.east) .. controls +(right:20mm) and 
      +(left:20mm) .. ([yshift=4mm]P2.west) node [right, pos=0.4] 
      {$\reg{Y}_2$};
      
      \draw[->] ([yshift=-4mm]V3.east) .. controls +(right:20mm) and 
      +(left:20mm) .. ([yshift=4mm]P3.west) node [right, pos=0.4] 
      {$\reg{Y}_3$};
      
      \draw[->] ([yshift=4mm]P0.east) .. controls +(right:20mm) and 
      +(left:20mm) .. ([yshift=-4mm]V1.west) node [left, pos=0.6] 
      {$\reg{X}_1$};
      
      \draw[->] ([yshift=4mm]P1.east) .. controls +(right:20mm) and 
      +(left:20mm) .. ([yshift=-4mm]V2.west) node [left, pos=0.6] 
      {$\reg{X}_2$};
      
      \draw[->] ([yshift=4mm]P2.east) .. controls +(right:20mm) and 
      +(left:20mm) .. ([yshift=-4mm]V3.west) node [left, pos=0.6]
      {$\reg{X}_3$};
      
      \draw[->] (P3.east) -- (M.west);
      
      \draw[->] ([yshift=-4mm]P0.east) -- ([yshift=-4mm]P1.west)
      node [below, midway] {$\reg{W}_1$};
      
      \draw[->] ([yshift=-4mm]P1.east) -- ([yshift=-4mm]P2.west)
      node [below, midway] {$\reg{W}_2$};
      
      \draw[->] ([yshift=-4mm]P2.east) -- ([yshift=-4mm]P3.west)
      node [below, midway] {$\reg{W}_3$};
      
    \end{tikzpicture}
  \end{center}
  \caption{A strategy of the form depicted in Figure~\ref{fig:strategy} may be
    interfaced with the actions of one or more other agents.
    In the interaction pictured, the second agent produces a measurement
    outcome at the conclusion of the interaction.}
  \label{fig:interacting-strategies}
\end{figure}
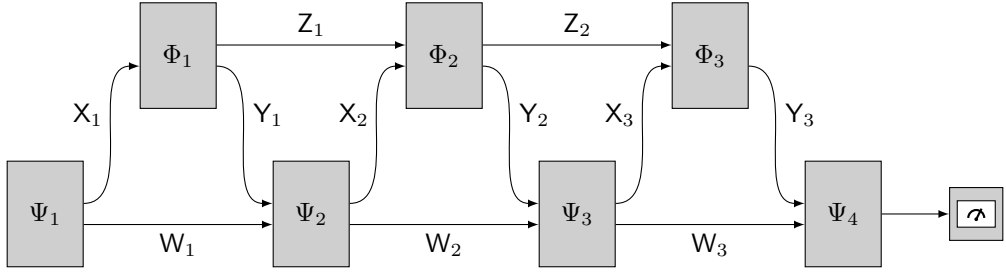

Now suppose that an agent of the form depicted in Figure~\ref{fig:strategy}
interacts with another agent, who performs a measurement at the conclusion of
the interaction, as is suggested by Figure~\ref{fig:interacting-strategies}.
For the sake of clarity, and to connect the framework to the setting of games,
the agent depicted in Figure~\ref{fig:strategy} will be called the
\emph{player} and the new agent will be called the \emph{referee}.
Through the quantum strategies framework, for each possible outcome $a$ that
may be produced by the referee's measurement, one may compute a positive
semidefinite operator
\begin{equation}
  P_a \in \Pos(\Y_1\otimes\cdots\otimes\Y_r \otimes 
  \X_1\otimes\cdots\otimes\X_r)
\end{equation}
with the property that, when the referee and player interact, the probability
for each measurement outcome to appear is given by
\begin{equation}
  \op{Pr}(\text{measurement outcome equals $a$})
  = \bigip{P_a}{Q},
\end{equation}
for $Q = J(\Phi)$ being the representation of the player's strategy.
It is not necessary for the purposes of this paper to explain precisely how
each operator $P_a$ is obtained, except to say that this operator may be
computed efficiently given descriptions of the channels
$\Phi_1,\ldots,\Phi_{r+1}$ and the final measurement.\footnote{%
  The process for obtaining these operators is again based on the Choi
  representation, but one must account for the measurement, the spaces must be
  ordered in a way that matches with the representation $J(\Xi)$, and an
  entry-wise complex conjugation is required to ensure that the expression
  $\bigip{P_a}{Q}$ correctly represents the probability associated with the
  outcome $a$.}

Finally, the quantum strategies framework extends to interactions involving
multiple agents in a fairly straightforward way.
In the context of quantum games, we are interested in interactions in which a
referee, who produces a final measurement outcome at the conclusion of the
interaction, interacts not just with a single player, but with multiple
players.
For example, Figure~\ref{fig:parallel-repetition} depicts the situation in
which a referee, represented by the channels $\Psi_1,\ldots,\Psi_4$ along with
the box suggesting a measurement, interacts with two players, each designated
by a superscript 1 or 2 on their respective channels and the registers they
touch.
In general, such an interaction may involve any number of players~$m$.
Following the standard assumption in non-cooperative game theory, the
$m$ players are assumed to not directly interact with one another:
all interactions are between a player and the referee.
(The referee could choose to pass information from one player to another, but
such an action must be understood as being in accordance with the referee's
specification.)

Also, although Figure~\ref{fig:parallel-repetition} might suggest a symmetry
between the players, this is not required---the registers being exchanged can
have arbitrary size, including the possibility of trivial (dimension 1)
registers that effectively represent the absence of information being sent or
received.
Equivalently, the referee may interleave the messages exchanged with different
players in an arbitrary way, and the number of exchanges may be different with
different players.

In any case of this sort, similar to the single-player case just discussed,
there will always exist an efficiently computable positive semidefinite
operator $P_a$, for each possible outcome of the referee's measurement, for
which the probability associated with that measurement outcome is given by
\begin{equation}
  \label{eq:multiple-players-outcome-probabilities}
  \op{Pr}(\text{measurement outcome equals $a$})
  = \bigip{P_a}{Q_1\otimes\cdots\otimes Q_m},
\end{equation}
assuming that the $m$ players play strategies represented by matrices
$Q_1,\ldots,Q_m$.
Indeed, aside from a permutation of tensor factors, one need not see this
as being an extension of the single-player case at all, for if the $m$ players
do not directly interact, they may be collectively viewed as a single player,
whose representation (again, up to a permutation of tensor factors) is given by
the tensor product $Q_1\otimes\cdots\otimes Q_m$.

\begin{figure}
  \begin{center} \small
    \begin{tikzpicture}[scale=0.35, 
        turn/.style={draw, minimum height=14mm, minimum width=10mm,
          fill = ChannelColor, text=ChannelTextColor},
        invisibleturn/.style={minimum height=14mm, 
          minimum width=10mm},
        measure/.style={draw, minimum width=7mm, minimum height=7mm,
          fill = ChannelColor},
        >=latex]
      
      \node (V1) at (-8,6) [turn] {$\Phi_1^1$};
      \node (V2) at (2,6) [turn] {$\Phi_2^1$};
      \node (V3) at (12,6) [turn] {$\Phi_3^1$};
      
      \node (W1) at (-8,-6) [turn] {$\Phi_1^2$};
      \node (W2) at (2,-6) [turn] {$\Phi_2^2$};
      \node (W3) at (12,-6) [turn] {$\Phi_3^2$};
      
      \node (M) at (22,0) [measure] {};
      
      \node (P0) at (-13,0) [turn] {$\Psi_1$};
      \node (P1) at (-3,0) [turn] {$\Psi_2$};
      \node (P2) at (7,0) [turn] {$\Psi_3$};
      \node (P3) at (17,0) [turn] {$\Psi_4$};
      
      \node[draw, minimum width=5mm, minimum height=3.5mm, fill=ReadoutColor]
      (readout) at (M) {};
      
      \draw[thick] ($(M)+(0.3,-0.15)$) arc (0:180:3mm);
      \draw[thick] ($(M)+(0.2,0.2)$) -- ($(M)+(0,-0.2)$);
      \draw[fill] ($(M)+(0,-0.2)$) circle (0.5mm);
      
      \draw[->] ([yshift=4mm]V1.east) -- ([yshift=4mm]V2.west)
      node [above, midway] {$\reg{Z}_1^1$};
      
      \draw[->] ([yshift=4mm]V2.east) -- ([yshift=4mm]V3.west)
      node [above, midway] {$\reg{Z}_2^1$};
      
      \draw[->] ([yshift=-4mm]V1.east) .. controls +(right:20mm) and 
      +(left:20mm) .. ([yshift=6mm]P1.west) node [right, pos=0.4] 
      {$\reg{Y}_1^1$};
      
      \draw[->] ([yshift=-4mm]V2.east) .. controls +(right:20mm) and 
      +(left:20mm) .. ([yshift=6mm]P2.west) node [right, pos=0.4] 
      {$\reg{Y}_2^1$};
      
      \draw[->] ([yshift=-4mm]V3.east) .. controls +(right:20mm) and 
      +(left:20mm) .. ([yshift=6mm]P3.west) node [right, pos=0.4] 
      {$\reg{Y}_3^1$};
      
      \draw[->] ([yshift=6mm]P0.east) .. controls +(right:20mm) and 
      +(left:20mm) .. ([yshift=-4mm]V1.west) node [left, pos=0.6] 
      {$\reg{X}_1^1$};
      
      \draw[->] ([yshift=6mm]P1.east) .. controls +(right:20mm) and 
      +(left:20mm) .. ([yshift=-4mm]V2.west) node [left, pos=0.6] 
      {$\reg{X}_2^1$};
      
      \draw[->] ([yshift=6mm]P2.east) .. controls +(right:20mm) and 
      +(left:20mm) .. ([yshift=-4mm]V3.west) node [left, pos=0.6]
      {$\reg{X}_3^1$};
      
      \draw[->] ([yshift=-4mm]W1.east) -- ([yshift=-4mm]W2.west)
      node [below, midway] {$\reg{Z}_1^2$};
      
      \draw[->] (P3.east) -- (M.west);
      
      \draw[->] ([yshift=-4mm]W2.east) -- ([yshift=-4mm]W3.west)
      node [below, midway] {$\reg{Z}_2^2$};
      
      \draw[->] ([yshift=4mm]W1.east) .. controls +(right:20mm) and 
      +(left:20mm) .. ([yshift=-6mm]P1.west) node [right,pos=0.4] 
      {$\reg{Y}_1^2$};
      
      \draw[->] ([yshift=4mm]W2.east) .. controls +(right:20mm) and 
      +(left:20mm) .. ([yshift=-6mm]P2.west) node [right,pos=0.4]
      {$\reg{Y}_2^2$};
      
      \draw[->] ([yshift=4mm]W3.east) .. controls +(right:20mm) and 
      +(left:20mm) .. ([yshift=-6mm]P3.west) node [right,pos=0.4]
      {$\reg{Y}_3^2$};
      
      \draw[->] ([yshift=-6mm]P0.east) .. controls +(right:20mm) and 
      +(left:20mm) .. ([yshift=4mm]W1.west) node [left, pos=0.6] 
      {$\reg{X}_1^2$};
      
      \draw[->] ([yshift=-6mm]P1.east) .. controls +(right:20mm) and 
      +(left:20mm) .. ([yshift=4mm]W2.west) node [left, pos=0.6] 
      {$\reg{X}_2^2$};
      
      \draw[->] ([yshift=-6mm]P2.east) .. controls +(right:20mm) and 
      +(left:20mm) .. ([yshift=4mm]W3.west) node [left, pos=0.6]
      {$\reg{X}_3^2$};
      
      \draw[->] ([yshift=0mm]P0.east) -- ([yshift=0mm]P1.west)
      node [below, midway] {$\reg{W}_1$};
      
      \draw[->] ([yshift=0mm]P1.east) -- ([yshift=0mm]P2.west)
      node [below, midway] {$\reg{W}_2$};
      
      \draw[->] ([yshift=0mm]P2.east) -- ([yshift=0mm]P3.west)
      node [below, midway] {$\reg{W}_3$};
      
    \end{tikzpicture}
  \end{center}
  \caption{An interaction between a referee (represented by channels
    $\Psi_1,\ldots,\Psi_4$ along with a measurement) and two players, both
    having a form similar to the strategy pictured in
    Figure~\ref{fig:strategy}.}
  \label{fig:parallel-repetition}
\end{figure}
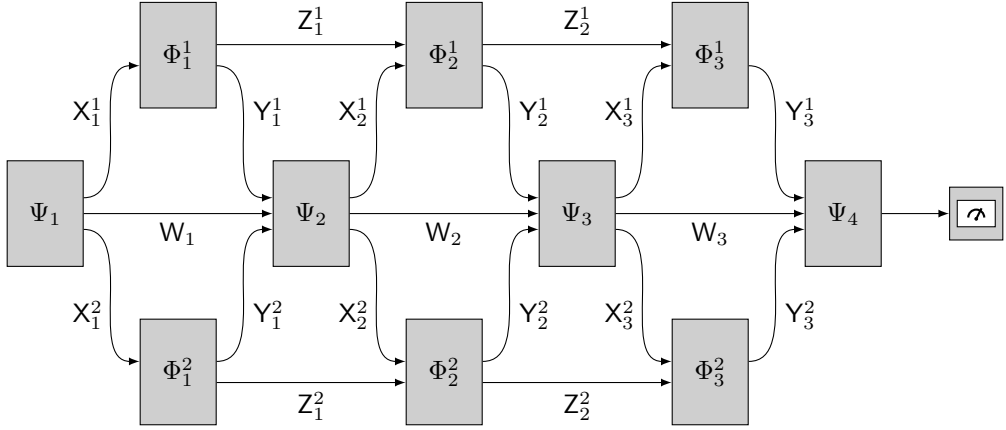

\subsection{PPAD and fixed-point problems}
\label{sec:PPAD}

We now recall the definition of the complexity class $\mathrm{PPAD}$,
which was first defined by Papadimitriou \cite{Papadimitriou1994} to capture
the complexity of certain total functions, including approximate fixed-point
problems when a fixed point is guaranteed to exist.
We also state a result due to Etessami and Yannakakis \cite{EtessamiY2010}
concerning the containment of a specific fixed-point problem in
$\mathrm{PPAD}$ to which we will later reduce the problem of computing
approximate fixed points of functions defined on density operators.

Before proceeding to these definitions, let us remark that all computational
problems in this paper involving real or complex scalars, vectors, matrices,
and so on, are assumed to refer to rational and/or Gaussian rational inputs and
outputs in which the number $a/b$ is encoded as a pair $\langle a,b\rangle$ and
$a/b + i\tsp c/d$ is encoded as a 4-tuples $\langle a,b,c,d\rangle$, for
integers $a$, $b$, $c$, and $d$ represented in signed binary notation.
The length of any such number then refers to the length of the encoding.

\subsubsection*{Total search problems and the complexity class PPAD}

The complexity class $\mathrm{PPAD}$ contains \emph{total search problems}.
In general, a total search problem in the complexity class $\mathrm{TFNP}$
is represented by a collection of sets $\{A_x\,:\,x\in\Sigma^{\ast}\}$,
with $A_x\subseteq\Sigma^{\ast}$ for each $x\in\Sigma^{\ast}$, satisfying these
properties:
\begin{mylist}{8mm}
\item[1.]
  There exists a polynomial $p$ such that $\abs{y}\leq p(\abs{x})$ for every
  $x\in\Sigma^{\ast}$ and $y\in A_x$.
\item[2.]
  There exists a polynomial-time computable predicate $R$ such that
  $R(x,y) = 1$ if and only if $y\in A_x$, for every choice of
  $x,y\in\Sigma^{\ast}$.
\item[3.]
  For every $x\in\Sigma^{\ast}$, the set $A_x$ is non-empty.
\end{mylist}
On a given input string $x\in\Sigma^{\ast}$, the goal of the associated problem
is to find any string $y\in A_x$.
Such search problems are deemed \emph{total} because an acceptable solution is
always guaranteed to exist.

In the context of total search problems in $\mathrm{TFNP}$, it is said that
a problem $\{A_x\,:\,x\in\Sigma^{\ast}\}$ is \emph{polynomial-time reducible}
to another problem $\{B_x\,:\,x\in\Sigma^{\ast}\}$ if there exist
polynomial-time computable functions $f$ and $g$ with the property that for
\begin{equation}
  y\in B_{f(x)} \Rightarrow g(y) \in A_x
\end{equation}
for every string $x\in\Sigma^{\ast}$.
In words, any input to the problem $A = \{A_x\,:\,x\in\Sigma^{\ast}\}$ can be
transformed in polynomial time to an instance $f(x) $ of the problem
$B = \{B_x\,:\,x\in\Sigma^{\ast}\}$ in such a way that any acceptable solution
$y$ to $B$ on input $f(x)$ can be transformed in polynomial time back to an
acceptable solution $g(y)$ to $A$ on input $x$.

Next, to state the definition of the class $\mathrm{PPAD}$, which is contained
in $\mathrm{TFNP}$, we begin with one specific problem in this class called the
\emph{end-of-the-line problem}. 

\pagebreak[3]

\begin{trivlist}
\item \emph{End-of-the-line problem}

\item
  \begin{tabular}{@{}lp{5.2in}@{}}
    Input: & 
    Boolean circuits $P$ and $S$, both having~$n$ input bits and
    $n$ output bits, satisfying $P(0^n) = 0^n \not= S(0^n)$.\\[1mm]
    Output: & 
    Any string $z\in\Sigma^n$ such that $S(P(z)) \not= z \not= 0^n$
    or $P(S(z)) \not= z$.
  \end{tabular}
\end{trivlist}

\noindent
(Formally speaking, if one is given an input string that does not encode
Boolean circuits $P$ and $S$ with the properties indicated, then the
associated set of acceptable solutions is defined as the singleton set
containing the empty string.
Alternatively, one may modify the definition of $\mathrm{TFNP}$ so that
problems may be defined only on a subset of the possible strings.)

The intuition behind this problem is that the circuits $P$ and $S$ allegedly
represent \emph{predecessor} and \emph{successor} functions on the set
$\Sigma^n$.
We envision a graph having vertex set $\Sigma^n$ with a directed edge from $x$
to $y$, for distinct vertices $x$ and $y$, if and only if both $y = S(x)$ and
$x = P(y)$.
The vertex $0^n$ must have in-degree 0 by the assumption $P(0^n) = 0^n$,
meaning that it is a \emph{source}.
The goal is to find either a \emph{sink}, meaning a vertex with out-degree~0,
which must necessarily exist, or a source different from $0^n$.
If $P(S(z))\not=z$, then $z$ is a sink (which could include the possibility
$z = 0^n$ if $0^n$ happens to have out-degree 0), while if
$S(P(z)) \not= z \not= 0^n$ then $z$ is a source different from $0^n$.
It is important that a source different from $0^n$ is an acceptable answer;
the variant of this problem that demands a sink as an output might potentially
be a more difficult computational problem.

Finally, $\mathrm{PPAD}$ is defined as the class of all total search problems
that are polynomial-time reducible to the end-of-the-line problem.

\subsubsection*{Fixed points of barycentric approximations in PPAD}

Suppose that $\{f_x\,:\,x\in\Sigma^{\ast}\}$ is a collection of functions
having the form
\begin{equation}
  f_x : \Delta_n \rightarrow \Delta_n,
\end{equation}
for $n = n(x)$ being polynomially bounded and polynomial-time computable.
We say that $\{f_x\,:\,x\in\Sigma^{\ast}\}$ is a
\emph{polynomial-time computable family} if there exists a polynomial-time
computable function $F$ so that
\begin{equation}
  F(x,\langle v\rangle) = \langle f_x(v) \rangle
\end{equation}
for every rational vector $v\in\Delta_n$, where angled brackets indicate the
encoding of any rational element of $\Delta_n$.

The following theorem follows from a more general result due to Etessami
and Yannakakis \cite{EtessamiY2010}.

\begin{theorem}
  \label{theorem:Etessami-Yannakakis}
  Suppose that $\{f_x\,:\,x\in\Sigma^{\ast}\}$ is a polynomial-time computable
  family of functions having the form
  $f_x : \Delta_n \rightarrow \Delta_n$,
  and for each $x\in\Sigma^{\ast}$ and each positive integer $r$ let
  \begin{equation}
    g_{x,r} : \Delta_n \rightarrow \Delta_n
  \end{equation}
  be the $r$-th order barycentric approximation to $f_x$.
  The problem of computing an exact fixed point of $g_{x,r}$ on the input
  $\langle x, 0^r\rangle$ is contained in the class $\mathrm{PPAD}$.
\end{theorem}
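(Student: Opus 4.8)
The plan is to exhibit $g_{x,r}$ as a piecewise-affine self-map of $\Delta_n$ that can be manipulated in polynomial time, and then to reduce the search for its fixed point to the fixed-point problem of Etessami and Yannakakis. First I would record totality: $g_{x,r}$ is continuous (the affine pieces agree on shared faces of the subdivision) and maps the compact convex set $\Delta_n$ into itself, so by Brouwer's theorem a fixed point exists and the associated search problem is total. I would also note that a fixed point may be taken to be rational of polynomially bounded length, since on any single $r$-th level simplex $S$ the map $g_{x,r}$ is affine, $g_{x,r}(v) = M_S v + w_S$, with $M_S, w_S$ rational of polynomial bit-length, so an exact fixed point lying in $S$ is a solution of the rational linear system $(M_S - \I) v = -w_S$ intersected with $S$.

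The algorithmic heart of the reduction is a polynomial-time routine that, given a rational point $v \in \Delta_n$, locates the $r$-th level simplex containing $v$ and returns its barycentric coordinates. Here I would use the standard correspondence between the barycentric subdivision and sorting: at each level, if $(\lambda_1,\ldots,\lambda_n)$ are the barycentric coordinates of $v$ with respect to the current simplex, then the sub-simplex containing $v$ is the one indexed by the permutation $\pi \in S_n$ that sorts these coordinates into nonincreasing order, and the coordinates of $v$ in that sub-simplex are recovered by the linear formula $\mu_i = i(\lambda_{\pi(i)} - \lambda_{\pi(i+1)})$ (with $\lambda_{\pi(n+1)} = 0$). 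Iterating this $r$ times---which is polynomial since $r$ is presented in unary as $0^r$---produces the vertices $v_1,\ldots,v_m \in \B_n^r$ and weights $\lambda_1,\ldots,\lambda_m$ with $v = \sum_j \lambda_j v_j$. Evaluating $f_x$ at each $v_j$ (possible in polynomial time since $\{f_x\}$ is a polynomial-time computable family and the $v_j$ are polynomial-length rationals) and forming $\sum_j \lambda_j f_x(v_j)$ then evaluates $g_{x,r}(v)$ exactly. In particular this gives membership in $\mathrm{TFNP}$: a candidate fixed point $v$ can be verified by computing $g_{x,r}(v)$ and checking $g_{x,r}(v) = v$.

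With this subroutine in hand I would set up the reduction to the Etessami--Yannakakis fixed-point problem, whose instances are piecewise-linear self-maps accessed through exactly this kind of local oracle: on input $\langle x, 0^r\rangle$ I produce the description of $g_{x,r}$ together with the polynomial-time procedures above, which on any cell of the subdivision return the affine map governing that cell, and I translate any exact fixed point returned for the E--Y instance back to a fixed point of $g_{x,r}$ (the back-translation is the identity, since the two functions coincide). Because their problem lies in $\mathrm{PPAD}$ and the reduction is polynomial-time, the containment follows.

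The step I expect to be the main obstacle is reconciling two competing demands in the last stage. The subdivision at level $r$ has $(n!)^r$ cells, so the piecewise-linear structure is exponentially large and cannot be written down explicitly; all access must be through the local, polynomial-time oracle just described, and I must confirm that the E--Y formulation indeed requires only such local access rather than an explicit list of pieces. Simultaneously, the problem demands an \emph{exact} fixed point, whereas the naive Sperner or simplicial route yields only an approximate one; the content of the Etessami--Yannakakis result that I am invoking is precisely that, for piecewise-affine maps with rational coefficients, the combinatorial fixed-point search can be coupled with exact rational linear-system solving on the identified cell to produce an exact fixed point without leaving $\mathrm{PPAD}$. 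Verifying that $g_{x,r}$ meets the precise hypotheses of their theorem---polynomial boundedness of the affine coefficients, and polynomial-time locatability of the cells---is where the real work lies.
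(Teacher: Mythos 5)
Your proposal is correct and takes essentially the same route as the paper: the paper states this theorem without proof, importing it directly from the more general result of Etessami and Yannakakis, which is precisely the reduction you set up. The additional detail you supply---the sorting-based cell location with $\mu_i = i(\lambda_{\pi(i)} - \lambda_{\pi(i+1)})$, the rationality and polynomial bit-size of exact fixed points on a cell, and the need for only local, polynomial-time access to the exponentially many cells---is exactly the content the paper leaves implicit in its remark that computations involving the $r$-th level barycentric subdivision can be carried out efficiently and exactly.
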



\pagebreak[3]

\section{Definitions of quantum games}
\label{sec:quantum-games}

We will now define a general class of quantum games and state the computational
problem upon which the remainder of the paper focuses.

In the class of games we consider, a referee exchanges quantum registers with
$m$ players over the course of $r$ rounds in a way that generalizes
Figure~\ref{fig:parallel-repetition} (in which $m = 2$ and $r=3$).
The referee's actions are described by channels $\Psi_1,\ldots,\Psi_{r+1}$
along with a final measurement, with these objects taking the following forms.
For $j\in\{2,\ldots,r\}$, the channel $\Psi_j$ takes input registers
\begin{equation}
  \bigl(\reg{W}_{j-1},\reg{Y}_{j-1}^1,\ldots,\reg{Y}_{j-1}^m\bigr)
\end{equation}
and outputs registers
\begin{equation}
  \bigl(\reg{W}_j,\reg{X}_j^1,\ldots,\reg{X}_j^m\bigr).
\end{equation}
The channels $\Psi_1$ and $\Psi_{r+1}$ have a similar form except that
$\Psi_1$ takes no input and $\Psi_{r+1}$ produces a single register
$\reg{W}_{r+1}$ as output.
We note that the registers need not all have the same size, and some may be
trivial, effectively representing the absence of a message transmission.
Finally, the measurement is performed on the register $\reg{W}_{r+1}$ and
has set of outcomes $\Gamma$.
In addition to the referee's actions, as just described, it is to be assumed
that a payoff function $v_k:\Gamma\rightarrow\real$ has been selected for each
player $k\in\{1,\ldots,m\}$.

A referee of this form determines a non-cooperative game, in
which an $m$-tuple of independent strategies for the players is interfaced with
the referee in the most natural way, leading to a distribution over payoffs for
the $m$ players.

Suppose that, by means of the quantum strategies framework, a selection of
the $m$ players' strategies $Q_1,\ldots,Q_m$ has been made, with each
being represented by an operator
\begin{equation}
  Q_k \in \Pos\bigl(
  \Y_1^k\otimes\cdots\otimes\Y_r^k\otimes
  \X_1^k\otimes\cdots\otimes\X_r^k\bigl),
\end{equation}
and suppose moreover that the referee has been represented by a collection
of operators $\{P_a\,:\,a\in\Gamma\}$, as was described in the previous
section.
We then have that each outcome $a\in\Gamma$ is produced by the referee with
probability $\ip{P_a}{Q_1\otimes\cdots\otimes Q_m}$, and the payoffs are then
determined accordingly.
The \emph{expected payoff} for player $k$ is therefore given by
\begin{equation}
  \sum_{a\in\Gamma} v_k(a) \ip{P_a}{Q_1\otimes\cdots\otimes Q_m}
  = \ip{H_k}{Q_1\otimes\cdots\otimes Q_m}
\end{equation}
for
\begin{equation}
  H_k = \sum_{a\in\Gamma} v_k(a) P_a.
\end{equation}
When it is convenient, we will refer to the operators $H_1,\ldots,H_m$ as
\emph{payoff operators}.
We observe that the payoff operators of an interactive quantum game can be
efficiently computed given the description of a referee's actions.

Hereafter let us write
\begin{equation}
  \S_k \subset \Pos\bigl(
  \Y_1^k\otimes\cdots\otimes\Y_r^k\otimes
  \X_1^k\otimes\cdots\otimes\X_r^k\bigl)
\end{equation}
to denote the set of strategy representations available to player $k$, for
each $k\in\{1,\ldots,m\}$.
Each of these sets is bounded and characterized by a finite collection of
affine linear constraints on the positive semidefinite cone acting on the
corresponding spaces.
In particular, the sets $\S_1,\ldots,\S_m$ are convex and compact.

A \emph{Nash equilibrium} of a quantum game of the form being considered is
an $m$-tuple $(Q_1,\ldots,Q_m) \in \S_1\times\cdots\times\S_m$ for which the
equality
\begin{equation}
  \label{eq:Nash-condition}
  \ip{H_k}{Q_1\otimes\cdots\otimes Q_m}
  = \sup_{R\in\S_k}
  \ip{H_k}{Q_1\otimes\cdots\otimes Q_{k-1}\otimes R\otimes Q_{k+1}\otimes
    \cdots \otimes Q_m}
\end{equation}
holds for every $k\in\{1,\ldots,m\}$.
Thus, no player can increase their expected payoff by unilaterally deviating
from a Nash equilibrium $(Q_1,\ldots,Q_m)$.
The existence of a Nash equilibrium in every interactive quantum game follows
from Glicksberg's generalization of Nash's theorem~\cite{Glicksberg1952}.
It is also straightforward to prove the existence of a Nash equilibrium in an
interactive quantum game more directly, through the Kakutani fixed-point
theorem (upon which Glicksberg's generalization is also based), following the
same reasoning as in Nash's proof in \cite{Nash1950} for the existence of an
equilibrium point in classical games.

For any choice of $\varepsilon > 0$, an $m$-tuple
of strategies $(Q_1,\ldots,Q_m) \in \S_1\times\cdots\times\S_m$ is
an \emph{$\varepsilon$-approximate Nash equilibrium} if it is the case that
\begin{equation}
\label{eq:approximate-Nash-condition}
  \ip{H_k}{Q_1\otimes\cdots\otimes Q_m}
  \geq \sup_{R\in\S_k}
  \ip{H_k}{Q_1\otimes\cdots\otimes Q_{k-1}\otimes R\otimes Q_{k+1}\otimes
    \cdots \otimes Q_m} - \varepsilon
\end{equation}
for every $k\in\{1,\ldots,m\}$.
In words, no player can increase their expected payoff by more than
$\varepsilon$ by deviating from an $\varepsilon$-approximate Nash equilibrium
$(Q_1,\ldots,Q_m)$.

For the sake of efficiency, it is prudent at this point to introduce some
additional notation.
For each $k\in\{1,\ldots,m\}$ we define
\begin{equation}
  \label{eq:V_k}
  \V_k = \Y_1^k\otimes\cdots\otimes\Y_r^k\otimes
  \X_1^k\otimes\cdots\otimes\X_r^k,
\end{equation}
so that $\S_k\subset\Pos(\V_k)$, as well as
\begin{equation}
  \V_{-k} =
  \V_1\otimes\cdots\otimes\V_{k-1}\otimes\V_{k+1}\otimes\cdots\otimes\V_m.
\end{equation}
For a given choice of strategies
$(Q_1,\ldots,Q_m)\in\S_1\times\cdots\times\S_m$ we define
\begin{equation}
  Q_{-k} =
  Q_1\otimes\cdots\otimes Q_{k-1} \otimes Q_{k+1}\otimes\cdots\otimes Q_m,
\end{equation}
so that $Q_{-k}\in\Pos(\V_{-k})$.
We stress that this is a tensor product---a similar notation is often used for
Cartesian products.

Observe that, for each $k\in\{1,\ldots,m\}$, there exists a
Hermitian-preserving linear map taking the form
$\Xi_k:\Lin(\V_{-k})\rightarrow\Lin(\V_k)$ and having the property that
\begin{equation}
  \ip{H_k}{Q_1\otimes\cdots\otimes Q_m}
  = \ip{\Xi_k(Q_{-k})}{Q_k}
\end{equation}
for every choice of $(Q_1,\ldots,Q_m) \in \S_1\times\cdots\times\S_m$
(or indeed for any choice of Hermitian operators $Q_1,\ldots,Q_m$, not just
strategies).
Explicitly,
\begin{equation}
  \Xi_k(Q_{-k})
  = \tr_{\V_{-k}}\bigl(
  (Q_1\otimes\cdots\otimes Q_{k-1} \otimes\I_{\V_k}
  \otimes Q_{k+1} \otimes \cdots\otimes Q_m)H_k\bigr).
\end{equation}
An equivalent condition to \eqref{eq:Nash-condition} is then that
\begin{equation}
  \ip{\Xi_k(Q_{-k})}{Q_k} = \sup_{R\in\S_k} \ip{\Xi_k(Q_{-k})}{R},
\end{equation}
while \eqref{eq:approximate-Nash-condition} is equivalent to
\begin{equation}
  \ip{\Xi_k(Q_{-k})}{Q_k} \geq \sup_{R\in\S_k} \ip{\Xi_k(Q_{-k})}{R} -
  \varepsilon.
\end{equation}

We may now define the computational problem of approximating a Nash equilibrium
of a quantum game.
We assume that the input to the problem consists of the payoff operators of a
given game, along with positive real number $\varepsilon$, although as noted
above one could alternatively describe a quantum game in terms of the referee's
actions, from which the payoff operators may be computed.

\begin{trivlist}
\item \emph{Approximate quantum Nash equilibrium}

\item
  \begin{tabular}{@{}lp{5.2in}@{}}
    Input: &
    Hermitian operators
    $H_1,\ldots,H_m \in \Herm(\V_1\otimes\cdots\otimes \V_m)$, for each $\V_k$
    taking the form \eqref{eq:V_k}, along with a positive real number
    $\varepsilon$.\\
    Output: & 
    An $\varepsilon$-approximate Nash equilibrium $(Q_1,\ldots,Q_m)$ of the
    interactive quantum game described by $H_1,\ldots,H_m$.
  \end{tabular}
\end{trivlist}

The following theorem, which is proved in the next section, represents the main
result of this paper.

\begin{theorem}
  \label{theorem:approximate-quantum-Nash-in-PPAD}
  The problem of computing an approximate quantum Nash equilibrium is
  contained in the complexity class PPAD.
\end{theorem}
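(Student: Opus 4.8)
The plan is to reduce the problem to the fixed-point problem of Theorem~\ref{theorem:Etessami-Yannakakis} by exhibiting a self-map of a single simplex whose approximate fixed points encode $\varepsilon$-approximate Nash equilibria, and then to approximate that map to the required precision using semidefinite programming. First I would pass from operators to the simplex using the discrete Wigner representation of Section~\ref{sec:discrete-Wigner}: after normalizing each strategy to unit trace and padding the underlying space so that its dimension $N_k$ is odd (so the representation applies), the set $\S_k$ maps to a convex compact subset $\widehat{\S}_k \subset \Delta_{N_k^2}$. Concatenating the blocks and rescaling by $1/m$ embeds the product $\widehat{\S}_1\times\cdots\times\widehat{\S}_m$ as a convex compact subset of a single simplex $\Delta_n$ with $n = \sum_k N_k^2$, which is polynomially bounded in the input size. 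Since each expected payoff $\ip{H_k}{Q_1\otimes\cdots\otimes Q_m} = \ip{\Xi_k(Q_{-k})}{Q_k}$ is affine in the block $Q_k$, this change of variables only rescales payoffs by a fixed factor.

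On this domain I would define an ideal continuous map blockwise by a regularized (proximal) best response, a convex form of Nash's gain function: given a profile $Q$, player $k$ updates
\begin{equation}
  Q_k \;\longmapsto\; \Pi_{\S_k}\bigl(Q_k + \Xi_k(Q_{-k})\bigr),
\end{equation}
where $\Pi_{\S_k}$ is Euclidean projection onto $\S_k$. The regularization is essential: an unregularized move toward an exact best response is multivalued and discontinuous, whereas the projection is single-valued and nonexpansive. The variational characterization of the projection shows that $Q$ is a fixed point of this map precisely when $\ip{\Xi_k(Q_{-k})}{R} \leq \ip{\Xi_k(Q_{-k})}{Q_k}$ for all $R\in\S_k$ and all $k$, i.e.\ exactly when $Q$ is a Nash equilibrium. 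Composing with the block decoding and projection $P:\Delta_n\to\widehat{\S}_1\times\cdots\times\widehat{\S}_m$ and re-embedding yields a self-map $G:\Delta_n\to\Delta_n$ whose fixed points are Nash equilibria. Because $\Xi_k$ is multilinear with coefficients read off from $H_k$, and projections are nonexpansive, $G$ is Lipschitz on the bounded domain with a constant $L$ that is at most exponential in the input size.

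I would then let $f_x$ be the polynomial-time computable family obtained by carrying out the computation of $G$ in exact rational arithmetic, except for the projections, which are solved to additive precision $\delta$ by semidefinite programming (using the efficient separation oracle for $\S_k$) and then rounded back into $\Delta_n$. The key observation that lets us ``fight fire with fire'' is that Theorem~\ref{theorem:Etessami-Yannakakis} returns an exact fixed point of the barycentric approximation $g_{x,r}$, which only samples $f_x$ at rational vertices of $\B_n^r$ and interpolates linearly; hence $f_x$ need not be continuous, and it suffices that $f_x$ agree with the Lipschitz map $G$ to within $\delta$ at each such vertex. If $v^\ast$ is the returned fixed point of $g_{x,r}$, then $v^\ast$ is a convex combination of the values $f_x(v_i)$ at vertices $v_i$ of a single level-$r$ cell, each within distance $(1-\tfrac{1}{n+1})^r$ of $v^\ast$; combining this diameter bound with the Lipschitz estimate for $G$ and the $\delta$-accuracy of $f_x$ gives $\norm{v^\ast - G(v^\ast)}_2 \leq L\bigl(1-\tfrac{1}{n+1}\bigr)^r + \delta$.

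Finally, a standard projected-gradient estimate converts an approximate fixed point of $G$ into an approximate equilibrium: if $\norm{G(Q) - Q}_2 \leq \mu$, then the projection inequality yields $\ip{\Xi_k(Q_{-k})}{R - Q_k} = O(\mu)$ uniformly over $R\in\S_k$, so decoding $v^\ast$ and projecting each block onto $\S_k$ produces a genuine tuple in $\S_1\times\cdots\times\S_m$ that is an $\varepsilon$-approximate Nash equilibrium once $\mu$ is a sufficiently small inverse-polynomial quantity. The main obstacle is precisely this error analysis: one must choose the subdivision level $r$ and the semidefinite-programming precision $\delta$ both polynomially so that $L(1-\tfrac{1}{n+1})^r + \delta \leq \mu$, which requires verifying that $L$ is at most exponential (so that $r\approx(n+1)\ln(L/\mu)$ stays polynomial) and that every rescaling introduced by the normalization, the multilinear contraction $\Xi_k$, and the Wigner decoding is tracked carefully enough to certify the final $\varepsilon$ bound.
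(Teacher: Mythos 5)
Your proposal is correct, and it follows the paper's overall architecture—normalize strategies to density operators, embed into a simplex via the discrete Wigner representation of Section~\ref{sec:discrete-Wigner}, invoke Theorem~\ref{theorem:Etessami-Yannakakis} on an exact fixed point of the barycentric approximation to an SDP-computed surrogate, and track Lipschitz moduli throughout—but the map whose fixed points encode equilibria is genuinely different. The paper uses a convex form of Nash's 1951 gain function: it sets $\sigma_k = \op{proj}(\rho_k\,|\,\C_k)$, projects the shifted payoff operator $\Xi_k(\sigma_{-k}) - \alpha_k \I$ onto the \emph{cone} $\K_k = \op{cone}(\C_k)$ to obtain $P_k$, and maps $\sigma_k \mapsto (\sigma_k + P_k)/(1+\tr(P_k))$; the resulting conversion from approximate fixed points to approximate equilibria (Lemma~\ref{lemma:gain-function-fixed-point}) incurs a square-root loss, $\delta = (1+3n\tsp\norm{A})^2\sqrt{\eta}$, which is why the paper takes $\eta = \varepsilon^2/(3nM)^4$. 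Your proximal map $Q_k \mapsto \op{proj}\bigl(Q_k + \Xi_k(Q_{-k})\,\big|\,\S_k\bigr)$ is a standard variational-inequality construction: its fixed points are exactly the Nash equilibria, and the projection inequality converts a $\mu$-approximate fixed point into an approximate equilibrium with error \emph{linear} in $\mu$ (up to factors of $\norm{\Xi_k(Q_{-k})}$ and the diameter of $\S_k$), which is quantitatively sharper and arguably more elementary; it also needs only one SDP-approximated projection per player rather than the paper's two (onto $\C_k$ and onto $\K_k$). What the paper's route buys is fidelity to Nash's original argument and a gain function that stays inside the normalized strategy sets by construction. Two small points of caution in your write-up: the required fixed-point accuracy $\mu$ must in general be inverse-\emph{exponential} in the input length (since $\norm{H_k}$ can be exponentially large in the bit-size of the payoff operators), not merely inverse-polynomial—this is harmless precisely because the subdivision level $r$ scales with $\log(1/\mu)$, as you note; and the final tuple must lie \emph{exactly} in $\S_1\times\cdots\times\S_m$, which requires, as the paper emphasizes, that the ellipsoid method return feasible points of the strategy polytopes rather than merely near-feasible approximations to the projections.
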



\section{Approximate quantum Nash equilibria in PPAD}

The purpose of this section is to prove
Theorem~\ref{theorem:approximate-quantum-Nash-in-PPAD}.
We shall begin with an overview of the proof, followed by three subsections
that address specific aspects of it.

The proofs of the existence of Nash equilibria in interactive quantum games
suggested above are both based on the Kakutani fixed-point theorem.
Toward the goal of establishing that the problem of approximating Nash
equilibria in quantum games is in the complexity class $\mathrm{PPAD}$,
however, it is instructive to consider a different path, based on an extension
of Nash's 1951 proof \cite{Nash1951} of the existence of equilibria in
classical games, which makes use of the Brouwer fixed-point theorem together
with the notion of a gain function.
This is a familiar path to analogous results in the classical setting
\cite{Papadimitriou1994,DaskalakisGP2009,ChenDT2009,EtessamiY2010}.

Our first step is to prove that the problem of approximating fixed points of
a certain class of continuous functions defined on \emph{density operators} is
contained in $\mathrm{PPAD}$.
This is done by means of a reduction, based on the discrete Wigner
representation defined in Section~\ref{sec:discrete-Wigner}, to the fixed-point
problem on the simplex established to be in $\mathrm{PPAD}$ by
Theorem~\ref{theorem:Etessami-Yannakakis}.

The second step is to consider an interactive quantum generalization of Nash's
gain function.
Intuitively speaking, this is a function defined on $m$-tuples of strategies
that improves each player's strategy, relative to the other players' strategies
being considered, so that the fixed points of this function are equilibrium
points.
This allows for the reduction of the problem of finding an approximate Nash
equilibrium in a quantum game to finding an approximate fixed point of this
gain function, which may be expressed as a function on density operators.

The computations required by both of the steps just described cannot be
performed exactly using rational number computations.
To control the precision required by rational number approximations to these
computations, we must bound the \emph{Lipschitz moduli} of various functions
that are composed to obtain the reduction.
This includes functions expressible as semidefinite programs but not known to
have closed form expressions.

The subsections that follow address these aspects of the proof.
The first subsection is concerned entirely with the Lipschitz moduli of various
function that will be needed in the remaining subsections, establishing bounds
that allow the proof to go through.
The second subsection establishes that the problem of computing approximate
fixed points of continuous functions defined on density operators (or Cartesian
products of density operators) is contained in PPAD.
And finally, the third subsection reduces the problem of computing approximate
Nash equilibria of interactive quantum games to the problem of computing
fixed points of continuous functions on density operators.


\subsection{Some useful functions and bounds on their Lipschitz moduli}
\label{sec:Lipschitz-functions}

This subsection simply lists several functions relevant to the proof together
with bounds on their Lipschitz moduli.

Whenever we refer to the Lipschitz condition for any function, defined for
vectors or operators, we will always use the 2-norm, meaning the standard
Euclidean norm for $\real^n$ and the Frobenius norm for the $n\times n$ complex
Hermitian operators $\Herm(\complex^n)$.
That is, a function $f$ is $K$-Lipschitz if
\begin{equation}
  \norm{f(u) - f(v)}_2 \leq K \tsp \norm{u - v}_2
\end{equation}
for all vectors $u$ and $v$ on which it is defined, and likewise for functions
defined on operators rather than vectors.
We refer to $K$ as the \emph{Lipschitz modulus} of $f$, as opposed to the more
standard \emph{Lipschitz constant}, as $K$ will generally not be constant
(as a function of the input length) for the functions we will encounter.

\subsubsection*{The discrete Wigner representation.}

The function $\psi:\Herm(\complex^n)\rightarrow\real^{n^2}$ associated
with the discrete Wigner representation we have defined
is $(1/K)$-Lipschitz, while $\psi^{-1}$ is $K$-Lipschitz,
for $K = \sqrt{n}\tsp(n+1)$.
More precisely, by the orthogonality of the operators
$\{V_1,\ldots,V_{n^2}\}$, we have the equality conditions
\begin{equation}
  \norm{\psi(H) - \psi(K)}_2 = \frac{1}{\sqrt{n}(n+1)}\norm{H - K}_2
\end{equation}
and
\begin{equation}
  \bignorm{\psi^{-1}(u) - \psi^{-1}(v)}_2 = \sqrt{n}(n+1)\norm{u-v}_2.
\end{equation}
These two moduli will cancel one another in the analysis to follow in the
next subsection.

\subsubsection*{Tensor products of density operators.}

The tensor product mapping
\begin{equation}
  (\rho_1,\ldots,\rho_m) \mapsto \rho_1\otimes\cdots\otimes\rho_m,
\end{equation}
from $\Density(\complex^{n_1})\times\cdots\times\Density(\complex^{n_m})$
to $\Density(\complex^{n_1}\otimes\cdots\otimes\complex^{n_m})$, is
$\sqrt{m}$-Lipschitz:
\begin{equation}
  \begin{aligned}
    \norm{\rho_1\otimes\cdots\otimes\rho_m
      - \sigma_1\otimes\cdots\otimes\sigma_m}_2
    \hspace{-4cm} \\[1mm]
    & \leq \norm{\rho_1\otimes\rho_2\otimes\cdots\otimes\rho_m-
      \sigma_1\otimes\rho_2\otimes\cdots\otimes\rho_m}_2\\
    & \quad + \norm{\sigma_1\otimes\rho_2\otimes\cdots\otimes\rho_m
      - \sigma_1\otimes\sigma_2\otimes\cdots\otimes\sigma_m}_2\\[1mm]
    & \leq \norm{\rho_1 - \sigma_1}_2
    \norm{\rho_2\otimes\cdots\otimes\rho_m}_2 \\
    & \quad + \norm{\sigma_1}_2 \norm{\rho_2\otimes\cdots\otimes\rho_m - 
      \sigma_2\otimes\cdots\otimes\sigma_m}_2 \\[1mm]
    & \leq \norm{\rho_1 - \sigma_1}_2 +
    \norm{\rho_2\otimes\cdots\otimes\rho_m
      - \sigma_2\otimes\cdots\otimes\sigma_m}_2,
  \end{aligned}
\end{equation}
and by iterating,
\begin{equation}
  \begin{aligned}
    \norm{\rho_1\otimes\cdots\otimes\rho_m
      - \sigma_1\otimes\cdots\otimes\sigma_m}_2 \hspace{-4cm}\\
    & \leq \norm{\rho_1 - \sigma_1}_2 + \cdots +
    \norm{\rho_m - \sigma_m}_2\\
    & \leq \sqrt{m}\,\bignorm{(\rho_1,\ldots,\rho_m) -
      (\sigma_1,\ldots,\sigma_m)}_2.
  \end{aligned}
\end{equation}

\subsubsection*{The maps $\Xi_k$.}

Recall the maps $\Xi_k:\Lin(\V_{-k})\rightarrow\Lin(\V_k)$ defined in the
previous section, which satisfy
\begin{equation}
  \ip{H_k}{Q_1\otimes\cdots\otimes Q_m} = \ip{\Xi_k(Q_{-k})}{Q_k}
\end{equation}
for all Hermitian operators $Q_1,\ldots,Q_m$, where $H_k$ is considered to
be fixed.
Explicitly, 
\begin{equation}
  \Xi_k(Q_{-k})
  = \tr_{\V_{-k}}\bigl(
  (Q_1\otimes\cdots\otimes Q_{k-1} \otimes\I_{\V_k}
  \otimes Q_{k+1} \otimes \cdots\otimes Q_m)H_k\bigr).
\end{equation}
Let us write $n_k = \dim(\V_k)$ and $n = n_1 \cdots n_m$.

First, the mapping
\begin{equation}
  Q_{-k} \mapsto
  Q_1\otimes\cdots\otimes Q_{k-1}\otimes\I_{\V_k}\otimes Q_{k+1}
  \otimes\cdots\otimes Q_m
\end{equation}
is $\sqrt{n_k}$-Lipschitz, while right-multiplication by $H_k$ is
$\norm{H_k}$-Lipschitz, $\norm{H_k}$ denoting the spectral norm of $H_k$.

Next, every quantum channel $\Psi:\Lin(\X) \rightarrow \Lin(\Y)$, including
the trace, is 1-Lipschitz with respect to the trace norm, and therefore
\begin{equation}
  \norm{\Psi(X) - \Psi(Y)}_2
  \leq \norm{\Psi(X) - \Psi(Y)}_1
  \leq \norm{X - Y}_1
  \leq \sqrt{\dim(\X)} \, \norm{X - Y}_2.
\end{equation}
That is, every channel is $\sqrt{\dim(\X)}$-Lipschitz with respect to
the Frobenius norm, for~$\X$ being the input space of the channel.
We may also note that tensoring any linear map (whether a channel or not)
with the identity channel does not change its Lipschitz modulus.
It follows that the partial trace over the space $\V_{-k}$ has Lipschitz
modulus $\sqrt{n/n_k}$.

Composing these functions, we find that the mapping $\Xi_k$ is
$(\norm{H_k}\sqrt{n})$-Lipschitz.

\subsubsection*{Projections onto closed and convex sets.}

For any closed and convex set $\C$, we define
$\op{proj}(X\,|\,\C)$ to be the projection of $X$ onto the set $\C$, 
meaning the unique point contained in $\C$ that is closest to $X$ with
respect to the 2-norm (or Frobenius norm).
The function $X\mapsto\op{proj}(X\,|\,\C)$ is, as is well known, always
1-Lipschitz.

\subsubsection*{Normalizing positive semidefinite operators.}

Next, define a function that \emph{normalizes} any positive semidefinite
operator $P\in\Pos(\complex^n)$ in the following way:
\begin{equation}
  \op{normalize}(P) = 
  \begin{cases}
    \frac{P}{\tr(P)} & \tr(P) \geq 1\\[1mm]
    P + (1 - \tr(P))\frac{\I_n}{n} & \tr(P) < 1.
  \end{cases}
\end{equation}
Strictly speaking this may not really be a normalization in the case that
$\tr(P) < 1$, but this function serves our purposes nevertheless.

The function $\op{normalize}:\Pos(\complex^n)\rightarrow\Density(\complex^n)$
is $(4n)$-Lipschitz.
This is perhaps easiest to prove by expressing the function as
$\mathrm{normalize} = g\circ f$ where $f$ and $g$ are defined as follows:
\begin{equation}
  \begin{aligned}
    f(P) & =
    \begin{cases}
      P & \tr(P) \geq 1\\[1mm]
      P + (1 - \tr(P))\frac{\I_n}{n} & \tr(P) < 1,
    \end{cases}\\[2mm]
    g(P) & = 
    \begin{cases}
      \frac{P}{\tr(P)} & \tr(P) \geq 1\\[1mm]
      P & \tr(P) < 1.
    \end{cases}
  \end{aligned}
\end{equation}

The function $f$ is $(2\sqrt{n})$-Lipschitz, which may be established by
considering three cases.
If $\tr(P)\geq 1$ and $\tr(Q)\geq 1$, then 
$\norm{f(P) - f(Q)}_2 = \norm{P - Q}_2$, trivially.
If $\tr(P)\geq 1$ and $\tr(Q)< 1$, then
\begin{equation}
  \begin{multlined}
    \norm{f(P) - f(Q)}_1
    = \Bignorm{P - Q - (1 - \tr(Q))\frac{\I_n}{n}}_1
    \leq \norm{P - Q}_1 + (1 - \tr(Q))\\[1mm]
    \leq \norm{P - Q}_1 + (\tr(P) - \tr(Q))
    \leq 2 \norm{P - Q}_1
  \end{multlined}
\end{equation}
and therefore
\begin{equation}
  \norm{f(P) - f(Q)}_2
  \leq 2 \norm{P - Q}_1 \leq 2\sqrt{n}\norm{P - Q}_2.
\end{equation}
If $\tr(P)<1$ and $\tr(Q)< 1$, then
\begin{equation}
  \begin{multlined}
    \norm{f(P) - f(Q)}_1
    = \Bignorm{P - Q - (\tr(P) - \tr(Q))\frac{\I_n}{n}}_1\\[1mm]
    \leq \norm{P - Q}_1 + \abs{\tr(P) - \tr(Q)}
    \leq 2 \norm{P - Q}_1,
  \end{multlined}
\end{equation}
and so again
\begin{equation}
  \norm{f(P) - f(Q)}_2
  \leq 2 \norm{P - Q}_1 \leq 2\sqrt{n}\norm{P - Q}_2.
\end{equation}

On the set of positive semidefinite operators having trace at least one,
the function $g$ is $(1 + \sqrt{n})$-Lipschitz; supposing that
$P,Q\in\Pos(\complex^n)$ satisfy $\tr(Q)\geq 1$ and $\tr(P)\geq 1$, we
find that
\begin{equation}
  \begin{aligned}
    \biggnorm{\frac{P}{\tr(P)} - \frac{Q}{\tr(Q)}}_2
    & \leq 
    \biggnorm{\frac{P}{\tr(P)} - \frac{Q}{\tr(P)}}_2 + 
    \biggnorm{\frac{Q}{\tr(P)} - \frac{Q}{\tr(Q)}}_2\\
    & =
    \frac{\norm{P - Q}_2}{\tr(P)}
    + \biggl( \frac{\tr(Q) - \tr(P)}{\tr(P)\tr(Q)}\biggr) \norm{Q}_2\\
    & \leq (1 + \sqrt{n})\norm{P - Q}_2.
  \end{aligned}
\end{equation}

Using $2\sqrt{n}(\sqrt{n}+1) \leq 4 n$ we obtain that $\mathrm{normalize}$ is
$(4n)$-Lipschitz.

\subsection{Fixed points of functions on density operators}
\label{sec:Density-matrix-fixed_point}

We now consider the computational problem of approximating fixed points of
continuous functions defined on density operators, proving that this problem is
in $\mathrm{PPAD}$ for functions having exponentially bounded Lipschitz moduli.

To state this fact more precisely, we require a few definitions.
First, a density operator $\rho\in\Density(\complex^n)$ is an
\emph{$\varepsilon$-approximate fixed point} of a function
$f:\Density(\complex^n) \rightarrow \Density(\complex^n)$ provided that
\begin{equation}
  \norm{f(\rho) - \rho}_2 \leq \varepsilon.
\end{equation}

Next, suppose that $\{f_x\,:\,x\in\Sigma^{\ast}\}$ is a collection of functions
having the form
\begin{equation}
  f_x : \Density(\complex^n) \rightarrow \Density(\complex^n)
\end{equation}
for $n = n(x)$ being polynomially bounded.
Mirroring a definition from Section~\ref{sec:PPAD} for functions defined on the
unit simplex, we shall say that $\{f_x\,:\,x\in\Sigma^{\ast}\}$ is a
\emph{polynomial-time computable family} if there exists a polynomial-time
computable function $F$ so that
\begin{equation}
  F(x,\langle \rho\rangle) = \langle f_x(\rho) \rangle
\end{equation}
for every rational density operator $\rho\in\Density(\complex^n)$, with
angled brackets representing encodings of rational density operators.
We must also define an approximate variant of this notion:
$\{f_x\,:\,x\in\Sigma^{\ast}\}$ is a
\emph{polynomial-time approximable family} if there exists a polynomial-time
computable family $\{g_{x,\varepsilon}\}$ satisfying
\begin{equation}
  \bignorm{
    f_x - g_{x,\varepsilon}
  }_2 \leq \varepsilon
\end{equation}
for every $x\in\Sigma^{\ast}$ and every positive rational number
$\varepsilon$.

Finally, the problem of computing $\varepsilon$-approximate fixed points of the
family $\{f_x\}$ is to output the encoding of any $\varepsilon$-approximate
fixed point of the function $f_x$ on input~$(x,\varepsilon)$.

\begin{theorem}
  \label{theorem:density-operator-fixed-points}
  Let $\{f_x\}$ be a polynomial-time approximable family of functions
  on density operators, let $p$ be a polynomial, and assume that each
  function $f_x$ is $K_x$-Lipschitz, for $K_x = 2^{p(\abs{x})}$.
  The problem of computing $\varepsilon$-approximate fixed points of the
  family $\{f_x\}$ is in $\mathrm{PPAD}$.
\end{theorem}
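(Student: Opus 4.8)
The plan is to reduce the problem to the simplex fixed-point problem of Theorem~\ref{theorem:Etessami-Yannakakis} through the discrete Wigner representation $\psi$. Writing $N = n^2$, I would first build a continuous retraction $R : \Delta_N \to \Density(\complex^n)$ that serves as a left inverse of $\psi$, namely
\begin{equation}
  R(v) = \op{normalize}\bigl(\op{proj}(\psi^{-1}(v)\,|\,\Pos(\complex^n))\bigr) .
\end{equation}
Because $\psi(\rho)$ is a genuine density operator for every $\rho\in\Density(\complex^n)$, the projection onto $\Pos(\complex^n)$ and the normalization both act as the identity there, so $R(\psi(\rho)) = \rho$. Using $R$ I then transport $f_x$ to the simplex by setting
\begin{equation}
  \tilde f_x = \psi\circ f_x\circ R : \Delta_N \to \Delta_N ,
\end{equation}
whose image lies in $\psi(\Density(\complex^n)) \subseteq \Delta_N$, so a fixed point exists by Brouwer's theorem. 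A short calculation shows that $v$ is a fixed point of $\tilde f_x$ exactly when $R(v)$ is a fixed point of $f_x$; more usefully, if $v$ is a $\delta$-approximate fixed point of $\tilde f_x$, then applying $\psi^{-1}$, $\op{proj}(\,\cdot\,|\,\Pos(\complex^n))$, and $\op{normalize}$ in turn and using that each fixes the density operator $f_x(R(v))$ shows that $R(v)$ is a $\bigl(4n^{3/2}(n+1)\delta\bigr)$-approximate fixed point of $f_x$.

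The second ingredient is that $\tilde f_x$ has an exponentially bounded Lipschitz modulus. Composing the bounds recorded in Section~\ref{sec:Lipschitz-functions}---$\psi$ is $(\sqrt n\,(n+1))^{-1}$-Lipschitz, $R$ is $4n^{3/2}(n+1)$-Lipschitz (as the composition of $\psi^{-1}$, a projection, and $\op{normalize}$), and $f_x$ is $K_x$-Lipschitz---the two factors $\sqrt n\,(n+1)$ contributed by $\psi$ and $\psi^{-1}$ cancel, leaving $\tilde f_x$ to be $(4nK_x)$-Lipschitz. This is where the hypothesis $K_x = 2^{p(\abs{x})}$ enters: $4nK_x$ is still $2^{\mathrm{poly}(\abs{x})}$.

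To carry out the reduction I would pick a subdivision level $r$ polynomial in $\abs{x}$, large enough that the geometrically shrinking barycentric diameter bound $(1 - 1/(N+1))^r$ dominates this modulus. Indeed, if $v^\ast$ is an exact fixed point of the $r$-th order barycentric approximation of $\tilde f_x$, then writing $v^\ast = \sum_i \lambda_i v_i$ as the convex combination of the vertices of its enclosing $r$-th level simplex gives
\begin{equation}
  \norm{\tilde f_x(v^\ast) - v^\ast}_2
  = \Bignorm{\sum_i \lambda_i\bigl(\tilde f_x(v^\ast) - \tilde f_x(v_i)\bigr)}_2
  \leq 4nK_x\Bigl(1 - \tfrac{1}{N+1}\Bigr)^r ,
\end{equation}
using the known simplex-diameter estimate. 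Since $r \approx (N+1)\ln(4nK_x/\delta)$ is polynomial in $\abs{x}$ and $\log(1/\delta)$, this makes $v^\ast$ a $\delta$-approximate fixed point of $\tilde f_x$ for any target $\delta = 2^{-\mathrm{poly}(\abs{x})}$. Choosing $\delta$ below $\varepsilon/(4n^{3/2}(n+1))$, invoking Theorem~\ref{theorem:Etessami-Yannakakis} to produce $v^\ast$, and outputting a sufficiently accurate rational approximation of $R(v^\ast)$ then yields an $\varepsilon$-approximate fixed point of $f_x$, placing the whole problem in $\mathrm{PPAD}$.

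The step I expect to be the main obstacle is that none of $f_x$, $\psi$, $\psi^{-1}$, $\op{proj}(\,\cdot\,|\,\Pos(\complex^n))$, or $\op{normalize}$ is exactly computable over the rationals, whereas Theorem~\ref{theorem:Etessami-Yannakakis} demands an \emph{exactly} polynomial-time computable family sending rational simplex points to rational simplex points. To reconcile this I would replace $\tilde f_x$ by a genuinely rational, polynomial-time computable surrogate $\hat f_x : \Delta_N \to \Delta_N$, assembled from the polynomial-time approximation $g_{x,\varepsilon'}$ furnished by approximability, from Gaussian-rational truncations of the Weyl operators used to evaluate $\psi$ and $\psi^{-1}$, and from a final projection onto $\Delta_N$ that forces the image back into the simplex; one must arrange that both $\norm{\hat f_x - \tilde f_x}_2$ and the modulus of $\hat f_x$ stay controlled, folding the resulting errors into $\delta$. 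The exponential Lipschitz bound is precisely what guarantees that $\mathrm{poly}(\abs{x})$ bits of precision suffice at every stage, so that this surrogate can be built and the reduction kept polynomial-time.
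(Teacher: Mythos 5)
Your reduction is, in its essentials, the paper's own: transport $f_x$ to the simplex $\Delta_{n^2}$ through the discrete Wigner map $\psi$, exploit the exact cancelation of the Lipschitz moduli of $\psi$ and $\psi^{-1}$, run Theorem~\ref{theorem:Etessami-Yannakakis} on a rational polynomial-time surrogate, and pull an exact fixed point of its barycentric approximation back to an approximate fixed point of $f_x$. The one structural difference is cosmetic: you retract onto density operators via $\op{normalize}\circ\op{proj}(\,\cdot\,|\,\Pos(\complex^n))$ where the paper projects directly onto $\Density(\complex^n)$; this costs a harmless extra factor of $4n$ in the Lipschitz modulus. Your treatment of the exact-rational-computability issue (a surrogate $\hat{f}_x$ with errors folded into $\delta$) is also the paper's strategy, though note that it is cleaner to avoid assuming any Lipschitz control on the surrogate itself---an arbitrary uniform approximation of a Lipschitz function need not be Lipschitz---by applying the Lipschitz bound only to the ideal map and adding the uniform approximation error at the simplex vertices, which is how the paper's estimate is organized.

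There is, however, one genuine gap: the discrete Wigner representation you invoke exists only in \emph{odd} dimension, and your proof never addresses this. The construction in Section~\ref{sec:discrete-Wigner} assumes $n$ odd throughout: the operator $T$ has unit trace only when $n$ is odd (for even $n$ its trace is $2$), and the orthogonality of the collection $\{V_{a,b}\}$ rests on $\sum_c \omega_n^{-2bc}$ vanishing whenever $b\neq 0$, which requires $2$ to be invertible modulo $n$ (for even $n$, the term $b = n/2$ survives). For even $n$ the operators $\{V_k\}$ are neither orthogonal nor trace-normalized, the stated formula for $\psi^{-1}$ is no longer the inverse of $\psi$, and $\psi$ need not carry density operators into $\Delta_{n^2}$---so your $\tilde{f}_x$ is not even a well-defined self-map of the simplex, and the entire reduction collapses. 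Since the theorem makes no parity assumption on $n = n(x)$, the argument as written fails for half of all dimensions. The repair is exactly the first step of the paper's proof: pad to dimension $n+1$ by a map $h_x$ that applies $f_x$ to the top-left $n\times n$ block plus $\lambda\I/n$ (where $\lambda$ is the lower-right corner entry) and outputs the result padded with zeros; one then verifies that the Lipschitz modulus grows by at most $\sqrt{2}$ and that an $(\varepsilon/3)$-approximate fixed point of $h_x$ yields, by reading off the block and adding the diagonal correction, an $\varepsilon$-approximate fixed point of $f_x$. You need this (or an equivalent padding) before $\psi$ may be used at all.
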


\begin{proof}
  Let us first observe that there is no loss of generality in assuming
  that, for every input $x$, the dimension $n$ is odd and at least~3.
  The case $n=1$ is trivial, and if $n$ is even, one may substitute the
  function $f_x$ by 
  $h_x : \Density(\complex^{n+1}) \rightarrow \Density(\complex^{n+1})$
  defined as
  \begin{equation}
    h_x
    \begin{pmatrix}
      P & u\\
      u^{\ast} & \lambda
    \end{pmatrix}
    = 
    \begin{pmatrix}
      f_x\bigl(P + \lambda\frac{\I}{n}\bigr) & 0\\
      0 & 0
    \end{pmatrix}.
  \end{equation}
  The Lipschitz modulus of $h_x$ is at most $\sqrt{2}$ times that of $f_x$,
  and every fixed point of $h_x$ takes the form
  \begin{equation}
    \sigma = 
    \begin{pmatrix}
      \rho & 0\\
      0 & 0
    \end{pmatrix}
  \end{equation}
  for $\rho$ a fixed point of $f_x$.
  If
  \begin{equation}
    \begin{pmatrix}
      P & u\\
      u^{\ast} & \lambda
    \end{pmatrix}
  \end{equation}
  is an $\varepsilon$-approximate fixed point of $h_x$, then it follows that
  \begin{equation}
    2 \norm{u}^2 + \lambda^2
    =
    \biggnorm{
      \begin{pmatrix}
        0 & u\\
        u^{\ast} & \lambda
      \end{pmatrix}
    }_2^2
    \leq
    \biggnorm{
      h_x
      \begin{pmatrix}
        P & u\\
        u^{\ast} & \lambda
      \end{pmatrix}
      -
      \begin{pmatrix}
        P & u\\
        u^{\ast} & \lambda
      \end{pmatrix}
    }_2^2
    \leq \varepsilon^2,
  \end{equation}
  from which it follows that
  \begin{equation}
    \begin{multlined}
      \biggnorm{f_x\biggl(P + \frac{\lambda \I}{n}\biggr)
        - \Bigl(P + \frac{\lambda \I}{n}\Bigr)}_2\\[2mm]
      \leq
      \biggnorm{
        h_x
        \begin{pmatrix}
          P & u\\
          u^{\ast} & \lambda
        \end{pmatrix}
        -
        \begin{pmatrix}
          P & u\\
          u^{\ast} & \lambda
        \end{pmatrix}
      }_2
      + 
      \biggnorm{
        \begin{pmatrix}
          P & u\\
          u^{\ast} & \lambda
        \end{pmatrix}
        -
        \begin{pmatrix}
          P+\frac{\lambda\I}{n} & 0\\
          0 & 0
        \end{pmatrix}
      }_2
      \leq \Bigl(1 + \sqrt{2}\Bigr)\varepsilon.
    \end{multlined}
  \end{equation}
  Thus, an $\varepsilon$-approximate fixed point of $f_x$ is easily obtained
  from an $(\varepsilon/3)$-approximate fixed point of $h_x$.
  
  Assuming now that $n$ is odd for each $x$, we define a function
  $g_x : \Delta_{n^2} \rightarrow \Delta_{n^2}$ as
  \begin{equation}
    g_x(v) = \psi\bigl( f_x \bigl(
    \op{proj}\bigl(\psi^{-1}(v)\,\big|\,\Density(\complex^n)\bigr)\bigr)\bigr)
  \end{equation}
  Here, the projection function is as defined in the previous subsection and
  $\psi$ is the mapping associated with the discrete Wigner representation
  defined in Section~\ref{sec:discrete-Wigner}.
  Given that $f_x$ is $K_x$-Lipschitz, it follows that $g_x$ is
  $K_x$-Lipschitz as well, as the projection is 1-Lipschitz and the Lipschitz
  moduli of the discrete Wigner mappings cancel.  

  Given that $f_x$ is polynomial-time approximable, it is possible to compute,
  in polynomial time, an approximation $\widetilde{g_x}$ to $g_x$ satisfying
  \begin{equation}
    \norm{\widetilde{g_x}(u) - g_x(u)}_2 \leq \frac{\varepsilon}{16 n^2 K_x}
  \end{equation}
  for every rational vector $u\in\Delta_{n^2}$.
  We note, in particular, that the projection onto $\Density(\complex^n)$
  may be approximated by first approximating a spectral decomposition of the
  operator $\psi^{-1}(v)$ and then projecting its eigenvalues onto the unit
  simplex $\Delta_n$.
  Alternatively, this projection arises as a special case of one
  discussed in the next subsection, where the ellipsoid method provides a
  polynomial-time algorithm to approximate the projection.

  Next, set 
  \begin{equation}
    r = (n^2+1)
    \bigceil{\log(1/\varepsilon) + 2 p(\abs{x}) + 2\log(n) + 4}.
  \end{equation}
  This number is polynomial in $\abs{x}$ and $\log(1/\varepsilon)$, and
  has been selected so that
  \begin{equation}
    \Bigl(1 - \frac{1}{n^2 + 1}\Bigr)^r
    < \exp\bigl(-\log(1/\varepsilon) - 2 p(\abs{x}) - 2 \log(n) - 4)
    < \frac{\varepsilon}{16 n^2 K_x^2}.
  \end{equation}
  By Theorem~\ref{theorem:Etessami-Yannakakis}, one may therefore compute an
  exact fixed point $v\in\Delta_{n^2}$ of the $r$-th level barycentric
  approximation to $\widetilde{g_x}$ in $\mathrm{PPAD}$.

  Supposing that such a fixed point $v$ is expressed as a convex combination
  \begin{equation}
    v = q_1 v_1 + \cdots + q_{n^2} v_{n^2}
  \end{equation}
  for $v_1,\ldots,v_{n^2} \in \B_{n^2}^r$ denoting vertices in any one of the
  simplices constructed at the \mbox{$r$-level} of the barycentric subdivision,
  we find that
  \begin{equation}
    \begin{aligned}
      \norm{g_x(v) - v}_2
      & = \bignorm{{g_x}(v) -
        \bigl(q_1 \widetilde{g_x}(v_1) + \cdots
        q_{n^2} \widetilde{g_x}(v_{n^2})\bigr)}_2\\
      & \leq
      \sum_{j = 1}^{n^2}
      q_j \bignorm{g_x(v) - \widetilde{g_x}(v_j)}_2\\
      & \leq
      \sum_{j = 1}^{n^2}
      q_j
      \Bigl(
      \bignorm{g_x(v) - g_x(v_j)}_2 + \bignorm{
        g_x(v_j) - \widetilde{g_x}(v_j)}_2 \Bigr)\\
      & \leq K_x \Bigl(1 - \frac{1}{n^2 + 1}\Bigr)^r
      + \frac{\varepsilon}{16 n^2 K_x}\\
      & \leq \frac{\varepsilon}{8 n^2 K_x}.
    \end{aligned}
  \end{equation}
  Thus, $v$ is an $(\varepsilon/(8 n^2 K_x))$-approximate fixed point of $g_x$.

  Now consider the Hermitian operator $\psi^{-1}(v)$.
  We have
  \begin{equation}
    \bignorm{ \psi^{-1}(v) - \psi^{-1}(g(v))}_2
    = \sqrt{n}(n+1) \bignorm{v - g(v)}_2 \leq \frac{\varepsilon}{4 K_x},
  \end{equation}
  and given that $\psi^{-1}(g(v))$ is necessarily a density operator, the
  operator $\psi^{-1}(v)$ therefore has distance at most $\varepsilon/(4 K_x)$
  from the set of density operators.
  By computing a density operator $\rho$ satisfying
  \begin{equation}
    \bignorm{
      \rho - \op{proj}\bigl(\psi^{-1}(v)\, \big| \,
      \Density(\complex^n)\bigr)}_2
    \leq \frac{\varepsilon}{4 K_x}
  \end{equation}
  as suggested above, we therefore have
  \begin{equation}
    \bignorm{
      \rho - \psi^{-1}(v)}_2 \leq \frac{\varepsilon}{2 K_x}.
  \end{equation}
  Consequently, noting that
  \begin{equation}
    f\bigl( \op{proj} \bigl(\psi^{-1}(v) \,\big|\, \Density(\complex^n)
    \bigr) = \psi^{-1}(g(v)),
  \end{equation}
  we find, by the triangle inequality, that
  \begin{equation}
    \begin{multlined}
      \bignorm{f_x(\rho) - \rho}_2
      \leq
      \bignorm{
        f(\rho) - f \bigl(
        \op{proj}\bigl(\psi^{-1}(v)\,\big|
        \,\Density(\complex^n)\bigr)\bigr)}_2\\[2mm]     
      + \bignorm{\psi^{-1}(g(v)) - \psi^{-1}(v)}_2
      + \bignorm{\psi^{-1}(v) - \rho}_2
      \leq \frac{\varepsilon}{4} + \frac{\varepsilon}{4 K_x} +
      \frac{\varepsilon}{2 K_x} \leq \varepsilon.
    \end{multlined}
  \end{equation}
  Thus, $\rho$ is an $\varepsilon$-approximate fixed point of $f_x$.
  As $\rho$ has been computed in polynomial time from $v$, the theorem is
  proved.
\end{proof}

\begin{cor}
  \label{cor:fixed-point}
  Let $\{f_x\}$ be a polynomial-time approximable family of functions having
  the form
  \begin{equation}
    f_x :
    \Density\bigl(\complex^{n_1}\bigr) \times \cdots \times
    \Density\bigl(\complex^{n_m}\bigr)
    \rightarrow
    \Density\bigl(\complex^{n_1}\bigr) \times \cdots \times
    \Density\bigl(\complex^{n_m}\bigr),
  \end{equation}
  for positive integers $n_1,\ldots,n_m$,
  let $p$ be a polynomial,
  and assume that each function $f_x$ is $K_x$-Lipschitz,
  for $K_x = 2^{p(\abs{x})}$.
  The problem of computing $\varepsilon$-approximate fixed points of the
  family $\{f_x\}$ is in $\mathrm{PPAD}$.
\end{cor}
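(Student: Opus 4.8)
The plan is to reduce the Cartesian-product fixed-point problem to the single-space problem already settled in Theorem~\ref{theorem:density-operator-fixed-points}. The key device is a block-diagonal embedding of the product $\Density(\complex^{n_1})\times\cdots\times\Density(\complex^{n_m})$ into the single density-operator space $\Density(\complex^N)$ for $N = n_1 + \cdots + n_m$, under which an $m$-tuple $(\rho_1,\ldots,\rho_m)$ is identified with the density operator $\frac{1}{m}\bigoplus_{k=1}^m \rho_k$. The equal weights $1/m$ are chosen so that a fixed point on the big space is forced to have all diagonal blocks of equal trace, which is what lets us read off a genuine tuple of density operators.

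Concretely, I would define $g_x : \Density(\complex^N) \to \Density(\complex^N)$ as follows. Given $\sigma$, let $\sigma_{11},\ldots,\sigma_{mm}$ denote its diagonal blocks (of sizes $n_1,\ldots,n_m$), set $\rho_k = \op{normalize}(m\,\sigma_{kk})$ using the normalization map from Section~\ref{sec:Lipschitz-functions}, apply the given function to obtain $(\tau_1,\ldots,\tau_m) = f_x(\rho_1,\ldots,\rho_m)$, and output $g_x(\sigma) = \frac{1}{m}\bigoplus_{k=1}^m \tau_k$. Since each $\tau_k$ is a density operator, $g_x(\sigma)$ is block diagonal with every block of trace exactly $1/m$, and is itself a density operator. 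Consequently any exact fixed point $\sigma$ of $g_x$ is of this form, so that $m\,\sigma_{kk}$ is already a density operator, $\op{normalize}$ acts trivially, and the fixed-point equation $\sigma_{kk} = \frac{1}{m}\tau_k$ becomes $\rho_k = f_x^{(k)}(\rho_1,\ldots,\rho_m)$ for each $k$; that is, $(\rho_1,\ldots,\rho_m)$ is a fixed point of $f_x$.

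Next I would verify the hypotheses of Theorem~\ref{theorem:density-operator-fixed-points}. Block extraction $\sigma\mapsto(\sigma_{11},\ldots,\sigma_{mm})$ is $1$-Lipschitz (distinct diagonal blocks involve disjoint matrix entries), the rescaling by $m$ contributes a factor $m$, the coordinatewise normalization is $(4N)$-Lipschitz by the bound established in Section~\ref{sec:Lipschitz-functions}, $f_x$ is $K_x$-Lipschitz by hypothesis, and the re-embedding $\frac{1}{m}\bigoplus$ has modulus $1/m$; composing, $g_x$ is $(4NK_x)$-Lipschitz, since the factors $m$ from rescaling and $1/m$ from re-embedding cancel. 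This is still of the form $2^{q(\abs{x})}$ for a polynomial $q$ because $m$ and $N$ are polynomially bounded. Polynomial-time approximability of $\{g_x\}$ is inherited from that of $\{f_x\}$ together with the exact rational block and embedding operations and the approximability of $\op{normalize}$. Theorem~\ref{theorem:density-operator-fixed-points} then produces, in $\mathrm{PPAD}$, an $\varepsilon'$-approximate fixed point $\sigma$ of $g_x$ for any desired $\varepsilon'$ computable from $\varepsilon$.

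Finally I would convert an $\varepsilon'$-approximate fixed point $\sigma$ of $g_x$ into an $\varepsilon$-approximate fixed point of $f_x$, which I expect to be the main obstacle. From $\norm{g_x(\sigma) - \sigma}_2 \le \varepsilon'$ and the fact that $g_x(\sigma)$ is exactly block diagonal with blocks $\frac{1}{m}\tau_k$, the off-diagonal part of $\sigma$ is small and each diagonal block obeys $\norm{\tfrac1m\tau_k - \sigma_{kk}}_2 \le \varepsilon'$. Taking traces and using $\tr(\tau_k)=1$ shows $\tr(m\,\sigma_{kk})$ lies within $m\sqrt{n_k}\,\varepsilon'$ of $1$, so $m\,\sigma_{kk}$ is within $O(m\sqrt{N}\,\varepsilon')$ of the density operators and $\op{normalize}$ moves it by no more than that order. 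Combining this with $\norm{\tau_k - m\,\sigma_{kk}}_2 \le m\varepsilon'$ through the triangle inequality bounds $\norm{f_x^{(k)}(\rho_1,\ldots,\rho_m) - \rho_k}_2$ blockwise, and summing over $k$ gives $\norm{f_x(\rho_1,\ldots,\rho_m) - (\rho_1,\ldots,\rho_m)}_2 \le \varepsilon$ once $\varepsilon'$ is taken polynomially smaller than $\varepsilon$, by a factor absorbing $m$, $\sqrt{N}$, and constants. The delicate point throughout is that the normalization is only controllably close to the identity because the trace estimate pins each block trace near $1/m$; away from approximate fixed points no such control is available, which is precisely why the equal-weight embedding and the globally Lipschitz $\op{normalize}$ are needed.
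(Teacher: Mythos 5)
Your proposal is correct and follows essentially the same route as the paper's own proof: the equal-weight block-diagonal embedding into $\Density(\complex^{n_1+\cdots+n_m})$, the composition of $f_x$ with blockwise $\op{normalize}(m\,\cdot\,)$ to define $g_x$, reduction to Theorem~\ref{theorem:density-operator-fixed-points}, and recovery of an approximate fixed point of $f_x$ by normalizing the diagonal blocks. The only difference is that you spell out the back-conversion error analysis (trace pinning near $1/m$, stability of $\op{normalize}$) that the paper leaves implicit, which is a welcome addition rather than a deviation.
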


\begin{proof}
  Let $n = n_1 + \cdots + n_m$ and define a mapping
  $h:\Density(\complex^n)\rightarrow\Density(\complex^n)$
  as follows:
  \begin{equation}
    h
    \begin{pmatrix}
      X_{1,1} & \cdots & X_{1,m}\\
      \vdots & \ddots & \vdots\\
      X_{m,1} & \cdots & X_{m,m}
    \end{pmatrix}
    =
    \frac{1}{m}
    \begin{pmatrix}
      \op{normalize}(m X_{1,1}) & & 0 \\
       & \ddots & \\
       0 & & \op{normalize}(m X_{m,m})
    \end{pmatrix},
  \end{equation}
  where it is to be understood that each $X_{i,j}$ has $n_i$ rows and $n_j$
  columns.
  The mapping $h$ is $(4n)$-Lipschitz and projects onto
  operators having the form
  \begin{equation}
    \label{eq:block-state}
    \frac{1}{m}
    \begin{pmatrix}
      \rho_1 & & 0\\
      & \ddots & \\
      0 & & \rho_m
    \end{pmatrix}.
  \end{equation}
  By composing $f_x$ with $h$ in the natural way, one obtains a function
  $g_x:\Density(\complex^n)\rightarrow\Density(\complex^n)$ such that
  \begin{equation}
    g_x
    \begin{pmatrix}
      X_{1,1} & \cdots & X_{1,m}\\
      \vdots & \ddots & \vdots\\
      X_{m,1} & \cdots & X_{m,m}
    \end{pmatrix}
    =
    \frac{1}{m}
    \begin{pmatrix}
      \sigma_1 & & 0 \\
       & \ddots & \\
       0 & & \sigma_m
    \end{pmatrix}
  \end{equation}
  for
  \begin{equation}
    (\sigma_1,\ldots,\sigma_m)
    =
    f_x(\op{normalize}(m X_{1,1}),\ldots,\op{normalize}(m X_{m,m})).
  \end{equation}
  
  Finally, from any approximate fixed point of the family $\{g_x\}$,
  an $\varepsilon$-approximate fixed point for $\{f_x\}$ is obtained
  by applying to it the function $h$ and reading off the diagonal operators.
  The problem of approximating fixed points of $\{f_x\}$ therefore reduces in
  polynomial time to that of $\{g_x\}$, which is in the class $\mathrm{PPAD}$.
\end{proof}


\subsection{Nash equilibria as fixed points of functions}

The final step of the proof of
Theorem~\ref{theorem:approximate-quantum-Nash-in-PPAD} is to reduce the problem
of computing approximate Nash equilibria of interactive quantum games to the
approximate fixed-point problem on Cartesian products of density operators
established by Corollary~\ref{cor:fixed-point} to be in PPAD.
To do this, we will consider an extension of Nash's gain function to quantum
strategies, as they are represented within the quantum strategies framework. 

For a quantum game of the general form described in
Section~\ref{sec:quantum-games}, the set of strategies available each player
$k\in\{1,\ldots,m\}$ is represented by the set
\begin{equation}
  \S_k \subset \Pos\bigl(\Y_1^k\otimes\cdots\otimes\Y_r^k\otimes
  \X_1^k\otimes\cdots\otimes\X_r^k\bigr),
\end{equation}
and we observe that for every choice of $Q_k\in\S_k$ we have
\begin{equation}
  \tr(Q_k) = d_k \stackrel{\text{\tiny def}}{=}
  \dim\bigl(\X_1^k\otimes\cdots\otimes\X_r^k\bigr).
\end{equation}
Define the set
\begin{equation}
  \C_k = \frac{1}{d_k}\S_k \subseteq \Density(\V_k),
\end{equation}
as well as the cone
\begin{equation}
  \K_k = \op{cone}(\C_k) = \bigl\{
  \lambda \rho\,:\, \lambda \geq 0,\;
  \rho\in\C_k\bigr\}.
\end{equation}

Now, for a given $m$-tuple $(\rho_1,\ldots,\rho_m)$ of density operators, we
define $G(\rho_1,\ldots,\rho_m)$ in the following way.
First, for each $k\in\{1,\ldots,m\}$, define
\begin{equation}
  \begin{aligned}
    \sigma_k & = \op{proj}(\rho_k\,|\,\C_k),\\
    \alpha_k & = \bigip{\Xi_k(\sigma_{-k})}{\sigma_k},\\
    P_k & = \op{proj}\bigl(
    \Xi_k(\sigma_{-k}) - \alpha_k \I_{\V_k}\,|\,\K_k),
  \end{aligned}
\end{equation}
and
\begin{equation}
  G_k(\rho_1,\ldots,\rho_m) = \op{normalize}\bigl(\sigma_k + P_k)
  = \frac{\sigma_k + P_k}{1 + \tr(P_k)}.
\end{equation}
Then define
\begin{equation}
  G(\rho_1,\ldots,\rho_m)
  = \bigl(G_1(\rho_1,\ldots,\rho_m),\ldots,G_m(\rho_1,\ldots,\rho_m)\bigr).
\end{equation}

By combining the Lipschitz moduli for the functions from which $G$ is formed,
over-estimating for the sake of a simple expression, we have that $G$ is
$K$-Lipschitz for
\begin{equation}
  K = 4n^2m M,\quad
  M = \max\{\norm{H_1},\ldots,\norm{H_m}\}+1,
  \quad\text{and}\quad
  n = n_1\cdots n_m
\end{equation}
for $n_k = \dim(\V_k)$.
The following lemma establishes that $G$ can be efficiently approximated.

\begin{lemma}
  There exists a deterministic, polynomial-time algorithm that, given input
  $H_1,\ldots,H_m$, $\rho_1,\ldots,\rho_m$, and $\delta > 0$, outputs
  $(\xi_1,\ldots,\xi_m)\in\C_1\times\cdots\times\C_m$ satisfying
  \begin{equation}
    \bignorm{
      G(\rho_1,\ldots,\rho_m) - (\xi_1,\ldots,\xi_m)}_2 < \delta.
  \end{equation}
\end{lemma}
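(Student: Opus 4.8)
The plan is to express each component $G_k$ of $G$ as a composition of a bounded number of maps, every one of which is either given in closed form or approximable to arbitrary precision in polynomial time, and then to propagate the approximation errors through the composition using the Lipschitz moduli assembled in Section~\ref{sec:Lipschitz-functions}. For each fixed $k$ the evaluation of $G_k(\rho_1,\ldots,\rho_m)$ splits into the following stages: (i) the projection $\sigma_k = \op{proj}(\rho_k\,|\,\C_k)$; (ii) the formation of the operator $\Xi_k(\sigma_{-k})$; (iii) the scalar $\alpha_k = \ip{\Xi_k(\sigma_{-k})}{\sigma_k}$ and the shifted operator $\Xi_k(\sigma_{-k}) - \alpha_k \I_{\V_k}$; (iv) the cone projection $P_k = \op{proj}(\Xi_k(\sigma_{-k}) - \alpha_k \I_{\V_k}\,|\,\K_k)$; and (v) the normalization $\op{normalize}(\sigma_k + P_k)$.

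Stages (ii), (iii), and (v) present no real difficulty. The map $\Xi_k$ is a fixed linear map---a tensor product with $\I_{\V_k}$, multiplication by the fixed operator $H_k$, and a partial trace---so it is evaluated by rational linear algebra, with Lipschitz modulus bounded by $\norm{H_k}\sqrt{n}$ as recorded in Section~\ref{sec:Lipschitz-functions}. The inner product defining $\alpha_k$, the scalar shift, and the $\op{normalize}$ map are likewise given by explicit formulas and shown there to be Lipschitz; hence each can be carried out on rational inputs to any prescribed precision with only a $\log(1/\delta')$ overhead.

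The crux is stages (i) and (iv), the Euclidean projections onto $\C_k$ and onto the cone $\K_k$, neither of which has a known closed form. The point to exploit is that both sets are presented as intersections of the positive semidefinite cone with a finite system of affine (respectively homogeneous linear) constraints obtained from the comb constraints \eqref{eq:SDP-strategy-constraints}, scaled by $1/d_k$ and, for $\K_k$, with the final normalization constraint homogenized. Each such set therefore admits a polynomial-time separation oracle, and the projection---being the minimizer of a convex quadratic over a semidefinite-representable set---can be approximated to within any precision $\delta'$ in time polynomial in the input length and $\log(1/\delta')$ by the ellipsoid method. To invoke the method one must certify the usual geometric preconditions (boundedness of the sets and, after a suitable interior shift, containment of a ball whose radius is bounded below by an inverse-exponential quantity), which follows from the explicit description of the strategy polytope. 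A minor point is that the ellipsoid method returns a point only in a slight outer relaxation of $\C_k$; since the exact value $G_k(\rho_1,\ldots,\rho_m)$ genuinely lies in $\C_k$ and projection is $1$-Lipschitz, a final approximate projection of the computed point back onto $\C_k$ yields an output $\xi_k \in \C_k$ while increasing its distance to $G_k$ only by the approximation error.

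Finally, because $G$ is $K$-Lipschitz with $K = 4 n^2 m M$ and is built from a bounded number of Lipschitz stages, it suffices to perform each individual approximation to a precision $\delta' = \delta / K^{O(1)}$, the exact exponent being read off from how the per-stage moduli of Section~\ref{sec:Lipschitz-functions} multiply along the composition. A triangle-inequality argument in the spirit of the one used in the proof of Theorem~\ref{theorem:density-operator-fixed-points} then bounds the total error by $\delta$. Since $\log(1/\delta') = \log(1/\delta) + O(\mathrm{poly}(\abs{x}))$ stays polynomially bounded, every stage---including the ellipsoid-based projections---runs in polynomial time, and the overall algorithm is deterministic and polynomial-time. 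I expect the projections of stages (i) and (iv), and in particular verifying the geometric regularity that underwrites their polynomial-time guarantee, to be the principal obstacle; given the Lipschitz bounds already in hand, the remaining error bookkeeping is routine.
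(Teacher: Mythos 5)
Your route is the same as the paper's---decompose $G$ into stages, evaluate the linear-algebraic stages directly, approximate the two Euclidean projections via the ellipsoid method applied to the semidefinite descriptions of $\C_k$ and $\K_k$, and propagate errors through the Lipschitz moduli of Section~\ref{sec:Lipschitz-functions}---but there is a genuine gap at exactly the point you call the crux. The ellipsoid method, run on the convex program whose optimum is the projection, delivers a feasible point whose \emph{objective value} is within $\eta$ of optimal; it does not by itself deliver a point close to the \emph{minimizer}. Your assertion that ``the projection---being the minimizer of a convex quadratic over a semidefinite-representable set---can be approximated to within any precision $\delta'$'' conflates these two notions, and for a merely convex objective the implication is false: near-optimal points can be arbitrarily far from any minimizer. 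What rescues the argument here is strong convexity of the squared distance, and this is precisely what the paper proves: if $\xi_k\in\C_k$ satisfies $\norm{\rho_k - \xi_k}_2^2 \leq \norm{\rho_k - \sigma_k}_2^2 + \eta$, then $\norm{\xi_k - \sigma_k}_2 \leq \sqrt{\eta}$, which follows from the variational characterization $\ip{\xi_k - \sigma_k}{\rho_k - \sigma_k} \leq 0$ of the true projection $\sigma_k$ combined with the law of cosines. This value-to-point conversion, together with the square-root loss it entails (which dictates how small $\eta$ must be taken relative to $\delta$), is the actual technical content of the paper's proof of this lemma and is absent from yours.

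A second, smaller defect: your fix for landing exactly in $\C_k$---``a final approximate projection of the computed point back onto $\C_k$''---is circular, since an approximate projection onto $\C_k$ whose output is guaranteed to lie in $\C_k$ is precisely the primitive being constructed. The standard repair is different: either shrink the computed point toward a known relative-interior point of $\C_k$, or, as the paper does, arrange the ellipsoid method so that it outputs a feasible $Y_k\in\C_k$ directly. Incidentally, the paper also linearizes the quadratic objective via a Schur-complement block constraint, so that each projection is literally a semidefinite program with linear objective; your direct invocation of convex optimization with a separation oracle is an acceptable alternative, but it does not remove the need for the error analysis above.
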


\begin{proof}
  Let us begin with the approximation of the projections
  $\sigma_k = \op{proj}(\rho_k\,|\,\C_k)$ for each $k\in\{1,\ldots,m\}$.
  For any Hermitian operator $H$, the block operator
  \begin{equation}
    \begin{pmatrix}
      Z & H\\
      H & \I
    \end{pmatrix}
  \end{equation}
  is positive semidefinite if and only if $Z\geq H^2$,
  by the Schur complement theorem.
  Minimizing the trace over all such $Z$ therefore yields
  $\tr(Z)=\norm{H}_2^2$.
  The projection $\sigma_k$ is therefore given by the optimal solution to the
  following semidefinite program:
  \begin{equation}
    \begin{aligned}
      \text{minimize}: \hspace{2mm} & \tr(Z_k)\\
      \text{subject to}: \hspace{2mm} &
      \begin{pmatrix}
        Z_k & \rho_k - Y_k\\
        \rho_k - Y_k & \I_{\V_k}
      \end{pmatrix}
      \geq 0\\
      & Y_k \in \C_k\\
      & Z_k \in \Pos(\V_k).
    \end{aligned}
  \end{equation}
  Specifically, the unique optimal solution $(Y_k,Z_k)$ to this semidefinite
  program satisfies $Y_k = \sigma_k = \op{proj}(\rho_k\,|\,\C_k)$ and
  $\tr(Z_k) = \norm{\rho_k - \sigma_k}_2^2$.

  Through the use of the ellipsoid method, as presented by
  \cite{GroetschelLS88} for instance, one may compute in time polynomial in the
  input length and $\log(1/\eta)$, for any given positive real number $\eta$, a
  feasible solution $(Z_k,Y_k)$ to this semidefinite program that is within
  $\eta$ of its optimal value.
  That is, in polynomial time one may compute $\xi_k\in\C_k$ such that
  \begin{equation}
    \norm{\rho_k - \xi_k}_2^2 \leq \norm{\rho_k - \sigma_k}_2^2 + \eta.
  \end{equation}
  This requires an examination of specific aspects of the semidefinite program
  that are reflected (up to a scalar multiple in the last constraint) by the
  equations \eqref{eq:SDP-strategy-constraints} in Section~\ref{sec:strategies}
  along with a recognition that the feasible region may be bounded.
  The analysis is straightforward and we omit it here.
  
  Now, if it is the case that $\rho_k\in\C_k$, then $\sigma_k = \rho_k$, and we
  conclude immediately that
  \begin{equation}
    \norm{\xi_k - \sigma_k}_2 \leq \sqrt{\eta}.
  \end{equation}
  If $\rho_k\not\in\C_k$, then it follows that
  $\ip{\xi_k - \sigma_k}{\rho_k - \sigma_k} \leq 0$;
  that this inequality holds for every choice of $\xi_k\in\C_k$ is, in fact, a
  well known necessary and sufficient condition for $\sigma_k$ to be the
  projection of $\rho_k$ into $\C_k$.
  By the law of cosines we have
  \begin{equation}
    \norm{\rho_k - \xi_k}_2^2
    = \norm{\rho_k - \sigma_k}_2^2 + \norm{\xi_k - \sigma_k}_2^2
    - 2\ip{\xi_k - \sigma_k}{\rho_k - \sigma_k},
  \end{equation}
  and so we conclude that
  \begin{equation}
    \norm{\rho_k - \xi_k}_2^2
    \geq \norm{\rho_k - \sigma_k}_2^2 + \norm{\xi_k - \sigma_k}_2^2,
  \end{equation}
  which again implies
  \begin{equation}
    \norm{\xi_k - \sigma_k}_2 \leq \sqrt{\eta}.
  \end{equation}
    
  The computation of each $P_k = \op{proj}\bigl(
  \Xi_k(\sigma_{-k}) - \alpha_k \I_{\V_k}\,|\,\K_k)$
  may be performed in almost exactly the same manner, through almost exactly
  the same semidefinite program.
  We note in particular that the optimal value is no larger than
  $\norm{\Xi_k(\sigma_{-k}) - \alpha_k \I_{\V_k}}^2_2$, as the projection
  of $\Xi_k(\sigma_{-k}) - \alpha_k \I_{\V_k}$ onto $\K_k$ is no further away
  from this operator than the zero operator, which is contained in $\K_k$, and
  so once again the feasible region may be bounded.
  Thus we may compute, again in polynomial time, $R_k\in\K_k$
  satisfying $\norm{P_k - R_k}_2 \leq \sqrt{\eta}$.

  All of the other computations required to approximate $G$ can be performed
  exactly.
  The lemma follows by choosing $\eta$ to be sufficiently small while
  polynomial in $\delta$ and the input length to the problem.
\end{proof}

It therefore follows from Corollary~\ref{cor:fixed-point} that, on input
$H_1,\ldots,H_m$ and $\delta>0$, the problem of computing a
$\delta$-approximate fixed point of $G$ is contained in PPAD.
It remains to prove that from such an approximate fixed point of $G$, we
obtain an approximate Nash equilibrium for a game described by
$H_1,\ldots,H_m$.

At this point we face a minor inconvenience: an approximate fixed point
$(\rho_1,\ldots,\rho_m)$ of $G$ provided by the PPAD computation whose
existence is implied by Corollary~\ref{cor:fixed-point} might not be contained
in $\C_1\times\cdots\times\C_m$, although by necessity it will be close.
Because we require an approximate Nash equilibrium to consist of strategies and
not ``near strategies,'' we must project these density operators onto the sets
$\C_1,\ldots,\C_m$.
Specifically, suppose that $(\rho_1,\ldots,\rho_m)$ is an
$(\eta/2)$-approximate fixed point of $G$, for
\begin{equation}
  \eta = \frac{\varepsilon^2}{(3nM)^4}.
\end{equation}
Writing $\sigma_k = \op{proj}(\rho_k\,|\,\C_k)$ for each $k\in\{1,\ldots,m\}$
as before, we find by the definition of $G$ that
\begin{equation}
  G(\rho_1,\ldots,\rho_m) = G(\sigma_1,\ldots,\sigma_m)
  \in \C_1\times\cdots\times\C_m,
\end{equation}
and combining this observation with the fact that projections are 1-Lipschitz,
it follows that $(\sigma_1,\ldots,\sigma_m)$ is also an $(\eta/2)$-approximate
fixed point of $G$.
Although the density operators $(\sigma_1,\ldots,\sigma_m)$ cannot be computed
exactly from $(\rho_1,\ldots,\rho_m)$, the analysis used in the proof of the
previous lemma implies that, in polynomial time, one may compute from 
$(\rho_1,\ldots,\rho_m)$ an $m$-tuple
$(\xi_1,\ldots,\xi_m)\in\C_1\times\cdots\times\C_m$
(with this containment guaranteed by the ellipsoid method)
satisfying
\begin{equation}
  \norm{(\sigma_1,\ldots,\sigma_m) - (\xi_1,\ldots,\xi_m)}_2 <
  \frac{\eta}{4K}.
\end{equation}
It follows that
\begin{equation}
  \begin{multlined}
    \norm{G(\xi_1,\ldots,\xi_m) - (\xi_1,\ldots,\xi_m)}_2\\[2mm]
    \leq
    \norm{G(\xi_1,\ldots,\xi_m) - G(\sigma_1,\ldots,\sigma_m)}_2
    + \norm{G(\sigma_1,\ldots,\sigma_m) - (\sigma_1,\ldots,\sigma_m)}_2\\[1mm]
    + \norm{(\sigma_1,\ldots,\sigma_m) - (\xi_1,\ldots,\xi_m)}_2
    \leq \eta.
  \end{multlined}
\end{equation}
Thus, $(\xi_1,\ldots,\xi_m)$ is an $\eta$-approximate fixed point of $G$.

One more lemma is needed, which will imply that by scaling the density
operators $(\xi_1,\ldots,\xi_m)$, an $\varepsilon$-approximate Nash equilibrium
is obtained.

\begin{lemma}
  \label{lemma:gain-function-fixed-point}
  Let $\C\subseteq\Density(\complex^n)$ be a nonempty, convex, and compact set
  of density operators, let $\K = \op{cone}(\C)$ be the cone generated by $\C$,
  and let $A\in\Herm(\complex^n)$ be a Hermitian operator.
  For a given density operator $\sigma\in\C$, define
  \begin{equation}
    P = \op{proj}(A - \ip{A}{\sigma} \I \,|\, \K),
  \end{equation}
  and assume that
  \begin{equation}
    \label{eq:norm-assumption}
    \biggnorm{
      \frac{\sigma + P}{1 + \tr(P)} - \sigma}_2 \leq \eta
  \end{equation}
  for $\eta> 0$.
  It is the case that
  \begin{equation}
    \ip{A}{\sigma} \geq \sup_{\xi\in\C} \ip{A}{\xi} - \delta
  \end{equation}
  for
  \begin{equation}
    \delta = (1 + 3n\tsp\norm{A})^2 \sqrt{\eta}.
  \end{equation}
\end{lemma}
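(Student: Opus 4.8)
The plan is to reduce the claimed best-response inequality to a statement about the \emph{gain operator} $B = A - \ip{A}{\sigma}\tsp\I$. For any $\xi\in\C$ we have $\tr(\xi) = 1$, so $\ip{B}{\xi} = \ip{A}{\xi} - \ip{A}{\sigma}$, and in particular $\ip{B}{\sigma} = 0$. Thus the desired conclusion is exactly the assertion $\sup_{\xi\in\C}\ip{B}{\xi}\le\delta$, and the whole argument becomes a matter of showing that the near-fixed-point hypothesis forces the gain of every $\xi\in\C$ over $\sigma$ to be small.

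The key structural tool is the variational characterization of the projection onto the cone $\K = \op{cone}(\C)$. Since $P = \op{proj}(B\,|\,\K)$ and $\K$ is a closed convex cone containing $0$, the Moreau conditions give $\ip{B - P}{X}\le 0$ for every $X\in\K$ together with $\ip{B - P}{P} = 0$. Because $\C\subseteq\K$, applying the first condition with $X = \xi$ yields $\ip{B}{\xi}\le\ip{P}{\xi}$ for every $\xi\in\C$. It therefore suffices to bound $\ip{P}{\xi}$, which I would do by showing that $P$ is itself small.

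To extract smallness of $P$ from the hypothesis, first rewrite $\frac{\sigma + P}{1 + \tr(P)} - \sigma = \frac{P - \tr(P)\tsp\sigma}{1 + \tr(P)}$, so that \eqref{eq:norm-assumption} reads $\norm{E}_2\le\eta(1 + \tr(P))$ for $E = P - \tr(P)\tsp\sigma$. The crucial identity is $\norm{P}_2^2 = \ip{B}{P} = \ip{B}{E}$, where the first equality is the Moreau relation $\ip{B - P}{P} = 0$ and the second uses $\ip{B}{\sigma} = 0$. Hence $\norm{P}_2^2\le\norm{B}_2\norm{E}_2\le\norm{B}_2\tsp\eta(1 + \tr(P))$, and combining with $\tr(P)\le\sqrt{n}\norm{P}_2$ gives a self-referential quadratic inequality in $\tr(P)$ whose solution bounds $\tr(P)$ (and hence $\norm{E}_2$) by a quantity of order $\sqrt{n\norm{B}_2\tsp\eta}$. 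Finally, $\ip{P}{\xi} = \tr(P)\ip{\sigma}{\xi} + \ip{E}{\xi}\le\tr(P) + \norm{E}_2$ using $\ip{\sigma}{\xi}\in[0,1]$ and $\norm{\xi}_2\le 1$; substituting $\norm{B}_2\le\norm{A}_2 + \abs{\ip{A}{\sigma}}\sqrt{n}\le 2\sqrt{n}\norm{A}$ and over-estimating constants produces the stated $\delta = (1 + 3n\norm{A})^2\sqrt{\eta}$.

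The main obstacle is the last step: the bound on $\tr(P)$ is circular, since it appears on both sides through $\norm{E}_2\le\eta(1+\tr(P))$ and $\tr(P)\le\sqrt{n}\norm{P}_2$. I would resolve the quadratic inequality $t^2\le n\norm{B}_2\tsp\eta\,(1+t)$ by splitting into the cases $t\le 1$ and $t>1$, and then carry the resulting constants through the conversion from $\norm{B}_2$ to $\norm{A}$ while keeping enough slack to land inside $(1 + 3n\norm{A})^2\sqrt{\eta}$. Everything else --- the Moreau decomposition, the two inner-product estimates, and the identity $\ip{B}{\sigma} = 0$ --- is routine once the gain operator $B$ is introduced.
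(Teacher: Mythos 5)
Your structural tools are all correct, and they are close in spirit to the paper's own argument: the Moreau characterization of $P=\op{proj}(B\,|\,\K)$ for $B=A-\ip{A}{\sigma}\tsp\I$, the resulting comparison $\ip{B}{\xi}\le\ip{P}{\xi}$ for $\xi\in\C\subseteq\K$, the identity $\norm{P}_2^2=\ip{B}{P}=\ip{B}{E}$ for $E=P-\tr(P)\tsp\sigma$, and the rewriting of the hypothesis as $\norm{E}_2\le\eta\tsp(1+\tr(P))$ are all valid. The gap is in the final quantitative step, and it is not a constant-chasing issue that ``enough slack'' can absorb: the route through $\ip{P}{\xi}\le\tr(P)+\norm{E}_2$ cannot reach the stated $\delta$. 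Your only handle on the leading term $\tr(P)$ is $\tr(P)\le\sqrt{n}\tsp\norm{P}_2$ combined with $\norm{P}_2^2\le\norm{B}_2\norm{E}_2$, which yields $\tr(P)\lesssim\sqrt{n\norm{B}_2\tsp\eta}$; the coefficient $\sqrt{n\norm{B}_2}$ can be of order $n^{3/4}\sqrt{\norm{A}}$, and this is \emph{not} dominated by $(1+3n\norm{A})^2$. Concretely, take $\norm{A}=\frac{1}{9n}$, so that $(1+3n\norm{A})^2=\frac{16}{9}$, a pair $(A,\sigma)$ for which $\norm{B}_2$ is genuinely of order $\sqrt{n}\tsp\norm{A}=\frac{1}{9\sqrt{n}}$ (e.g.\ $A$ essentially traceless with all eigenvalues of magnitude $\norm{A}$ and $\ip{A}{\sigma}=0$), and $\sqrt{\eta}=\frac{1}{16n}$. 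Then the best bound your inequalities permit is of order $\sqrt{n\norm{B}_2\tsp\eta}\approx\frac{1}{48\tsp n^{3/4}}$, whereas $\delta=\frac{16}{9}\sqrt{\eta}=\frac{1}{9n}$; the ratio of the former to the latter is $\frac{3}{16}n^{1/4}$, which exceeds $1$ once $n\gtrsim 800$ and then grows without bound. The trivial estimate $\ip{B}{\xi}\le 2\norm{A}=\frac{2}{9n}$ does not rescue this regime either. So the plan as written proves a strictly weaker inequality than the lemma asserts.

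The repair, inside your own framework, is to drop the decomposition $\ip{P}{\xi}=\tr(P)\ip{\sigma}{\xi}+\ip{E}{\xi}$ entirely---its two estimates $\ip{\sigma}{\xi}\le 1$ and $\tr(P)\le\sqrt{n}\tsp\norm{P}_2$ are never tight simultaneously, and invoking both worst cases is exactly where the stray dimension factor enters. Instead apply Cauchy--Schwarz directly: $\ip{B}{\xi}\le\ip{P}{\xi}\le\norm{P}_2\norm{\xi}_2\le\norm{P}_2$, and then use your identity to get $\norm{P}_2^2=\ip{B}{E}\le\norm{B}_2\norm{E}_2\le\norm{B}_2(1+\tr(P))\tsp\eta$. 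Now the crude bound $\tr(P)\le\sqrt{n}\tsp\norm{P}_2\le 2\sqrt{n}\tsp\norm{B}_2\le 4n\norm{A}$ (obtained by comparing $P$ with the competitor $0\in\K$) is harmless, because it only enters the factor multiplying $\eta$: one obtains $\ip{B}{\xi}\le\sqrt{2\sqrt{n}\tsp\norm{A}\,(1+4n\norm{A})}\,\sqrt{\eta}\le(1+3n\norm{A})\sqrt{\eta}$, which is even stronger than required. This is in essence how the paper's proof avoids the problem: it squares the minimality inequality $\norm{B-P}_2\le\norm{B-\lambda\xi}_2$ and sets the free parameter $\lambda=\sqrt{\eta}$, so that every dimension-dependent factor multiplies $\norm{E}_2=O(\eta)$ rather than the leading $O(\sqrt{\eta})$ term.
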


\begin{proof}
  The operator $P$ is defined to be the closest element of the cone $\K$ to the
  operator $A - \ip{A}{\sigma} \I$ with respect to the Frobenius norm, which is
  to say that
  \begin{equation}
    \label{eq:original-inequality}
    \bignorm{\bigl(A - \ip{A}{\sigma} \I\bigr) - P}_2
    \leq \bignorm{\bigl(A - \ip{A}{\sigma} \I\bigr) - \lambda\xi}_2
  \end{equation}
  for every choice of $\lambda\geq 0$ and $\xi\in\C$.
  We may first consider the case that $\lambda = 0$, from which the bound
  \begin{equation}
    \norm{P}_2
    \leq 2\, \bignorm{A - \ip{A}{\sigma}\I}_2
    \leq 4 \sqrt{n} \,\norm{A},
  \end{equation}
  is obtained, implying that $\tr(P) \leq 4n\norm{A}$.
  It follows that
  \begin{equation}
    \norm{P - \tr(P)\sigma}_2
    = (1 + \tr(P)) \biggnorm{ \frac{\sigma + P}{1 + \tr(P)} - \sigma}_2
    \leq \bigl(1 + 4 n \norm{A}\bigr) \eta.
  \end{equation}

  Next, by squaring both sides of the inequality \eqref{eq:original-inequality}
  and simplifying, one obtains
  \begin{equation}
    \label{eq:inequality1}
    \lambda \bigip{A - \ip{A}{\sigma} \I}{\xi}
    \leq \bigip{A - \ip{A}{\sigma} \I}{P}
    + \frac{\lambda^2}{2} \norm{\xi}_2^2 - \frac{1}{2}\norm{P}_2^2.
  \end{equation}
  Disregarding the negative final term and observing the inequality
  $\norm{\xi}_2\leq 1$ and the equality $\bigip{A-\ip{A}{\sigma}\I}{\sigma}=0$,
  we find that
  \begin{equation}
    \label{eq:inequality2}
    \lambda \bigip{A - \ip{A}{\sigma} \I}{\xi}
    \leq \bigip{A-\ip{A}{\sigma}\I}{P-\tr(P)\sigma} + \frac{\lambda^2}{2},
  \end{equation}
  again for every $\lambda\geq 0$ and $\xi\in\C$.
  Setting $\lambda = \sqrt{\eta}$ and applying the Cauchy--Schwarz
  inequality yields
  \begin{equation}
    \begin{aligned}
      \ip{A}{\xi} - \ip{A}{\sigma}
      & = \bigip{A - \ip{A}{\sigma} \I}{\xi}\\
      & \leq \frac{1}{\sqrt{\eta}}
      \bignorm{A - \ip{A}{\sigma}\I}_2
      \bignorm{P - \tr(P)\sigma}_2 + \frac{\sqrt{\eta}}{2}\\
      & \leq 
      \Bigl(2 \sqrt{n}\tsp\norm{A} (1 + 4n\tsp\norm{A}) + \frac{1}{2}\Bigr)
      \sqrt{\eta}\\
      & \leq (1 + 3n\tsp\norm{A})^2 \sqrt{\eta}.
    \end{aligned}
  \end{equation}
  As this bound holds for every $\xi\in\C$, the lemma is proved.
\end{proof}

We conclude from this lemma that
\begin{equation}
  \label{eq:sup-bound-1}
  \bigip{\Xi_k(\xi_{-k})}{\xi_k}
  \geq
  \sup_{\tau\in\C_k} \bigip{\Xi_k(\xi_{-k})}{\tau} -
  \bigl(1 + 3 n_k \norm{\Xi_k(\xi_{-k})}\bigr)^2 \sqrt{\eta}.
\end{equation}
Define $Q_k = d_k \xi_k$ for each $k\in\{1,\ldots,m\}$ so that
\begin{equation}
  (Q_1,\ldots,Q_m) \in \S_1\times\cdots\times\S_m.
\end{equation}
By \eqref{eq:sup-bound-1} it follows that
\begin{equation}
  \begin{aligned}
    \bigip{\Xi_k(Q_{-k})}{Q_k}
    & \geq \sup_{R\in\S_k} \bigip{\Xi_k(Q_{-k})}{R} -
    d_1\cdots d_m \bigl(1 + 3 n_k \norm{\Xi_k(\sigma_{-k})}\bigr)^2
    \sqrt{\eta}\\
    & \geq \sup_{R\in\S_k} \bigip{\Xi_k(Q_{-k})}{R} - \varepsilon,
  \end{aligned}    
\end{equation}
and therefore $(Q_1,\ldots,Q_m)$ is an $\varepsilon$-approximate Nash
equilibrium of the interactive quantum game having associated payoff operators
$H_1,\ldots,H_m$.
As $(Q_1,\ldots,Q_m)$ has been obtained from $\varepsilon$ together
with the approximate fixed point $(\rho_1,\ldots,\rho_m)$ of $G$ by a
polynomial-time computation,
Theorem~\ref{theorem:approximate-quantum-Nash-in-PPAD} is proved.


\section{Discussion of directions for further research}

We conclude the paper with a collection of open problems and suggestions
of topics that we hope might inspire further work on quantum game theory and
its connections to theoretical computer science.

\begin{mylist}{8mm}
\item[1.]
  Is there a quantum extension or variant of the Lemke--Howson algorithm
  \cite{LemkeH1964} for computing or approximating a Nash equilibrium in a
  non-interactive two-player quantum game?

\item[2.]
  It is interesting to consider quantum players having different restrictions
  placed on their strategies.
  For example, we might insist that players process quantum information using
  limited resources, or restrict player's actions so that they represent
  adversarial models of noise.
  Along similar lines, one may consider alternative ways of describing the
  referee's actions, such as by quantum circuits.  
  What can be said about quantum games in contexts such as these?

\item[3.]
  We have limited our focus to a non-cooperative setting, in which players must
  play independently, representing an inability for the players to form
  collusions.
  The consideration of collusions, and more generally the study of
  \emph{cooperative quantum game theory}, is an interesting research
  direction.

  For instance, let us imagine that there exists a shared quantum state that
  allows players to implement a strategy in a quantum game that is good by some
  measure.
  Nonlocal games, for instance, may naturally be viewed as non-interactive
  games in a purely cooperative setting where shared quantum states can lead
  to improved strategies.
  In the general, not completely cooperative setting, such a shared state
  could be provided by a trusted non-participant in the game, like in the work
  of Zhang~\cite{Zhang2012} on correlated (or entangled) equilibria.
  An alternative is a setting in which such a state must arise from an
  unmediated interaction between colluding players, in which case players could
  deviate from any prescribed protocol that produces this state.
  
\item[4.]
  Closely related to the notion of an unmediated interaction, one may consider
  games in which there is no referee.
  Coin-flipping may be cast as an example, and its evidently complicated
  structure suggests nothing less in a setting in which the goal is,
  perhaps, to produce a quantum state of interest.

\item[5.]
  Generally speaking, can quantum game theory provide a foundation through
  which one may discover quantum protocols having either theoretical or
  practical utility?
  
\end{mylist}


\subsection*{Acknowledgments}

This research was undertaken thanks in part to funding from the Canada First
Research Excellence Fund.
We thank Sanketh Menda for helpful suggestions at an early stage of this work.


\bibliographystyle{quantum}

\begin{thebibliography}{10}

\bibitem{Meyer1999}
David Meyer.
\newblock ``Quantum strategies''.
\newblock \href{https://dx.doi.org/10.1103/PhysRevLett.82.1052}{Physical Review
  Letters {\bf 82}, 1052--1055}~(1999).

\bibitem{EisertWL1999}
Jens Eisert, Martin Wilkens, and Maciej Lewenstein.
\newblock ``Quantum games and quantum strategies''.
\newblock \href{https://dx.doi.org/10.1103/PhysRevLett.83.3077}{Physical Review
  Letters {\bf 83}, 3077--3080}~(1999).

\bibitem{vonNeumannM1953}
John von Neumann and Oskar Morgenstern.
\newblock ``Theory of games and economic behavior''.
\newblock \href{https://dx.doi.org/10.1515/9781400829460}{Princeton University
  Press}. ~(1953).
\newblock third edition.

\bibitem{Nash1950}
John Nash.
\newblock ``Equilibrium points in \emph{n}-person games''.
\newblock \href{https://dx.doi.org/10.1073/pnas.36.1.48}{Proceedings of the
  National Academy of Sciences {\bf 36}, 48--49}~(1950).

\bibitem{Nash1950-thesis}
John Nash.
\newblock ``Non-cooperative games''.
\newblock PhD thesis.
\newblock Princeton University.
\newblock ~(1950).

\bibitem{GuoZK2008}
Hong Guo, Juheng Zhang, and Gary Koehler.
\newblock ``A survey of quantum games''.
\newblock \href{https://dx.doi.org/10.1016/j.dss.2008.07.001}{Decision Support
  Systems {\bf 46}, 318--332}~(2008).

\bibitem{vanEnkP2002}
Steven van Enk and Rob Pike.
\newblock ``Classical rules in quantum games''.
\newblock \href{https://dx.doi.org/10.1103/PhysRevA.66.024306}{Physical Review
  A {\bf 66}, 024306}~(2002).

\bibitem{Wu2004a}
Jinshan Wu.
\newblock ``A new mathematical representation of game theory {I}''.
\newblock Unpublished manuscript, arXiv:quant-ph/0404159~(2004).

\bibitem{Wu2004b}
Jinshan Wu.
\newblock ``A new mathematical representation of game theory {II}''.
\newblock Unpublished manuscript, arXiv:quant-ph/0405183~(2004).

\bibitem{Zhang2012}
Shengyu Zhang.
\newblock ``Quantum strategic game theory''.
\newblock In Proceedings of the 3rd Innovations in Theoretical Computer Science
  Conference.
\newblock \href{https://dx.doi.org/10.1145/2090236.2090241}{Pages 39--59}.
\newblock ~(2012).

\bibitem{GutoskiW2007}
Gus Gutoski and John Watrous.
\newblock ``Toward a general theory of quantum games''.
\newblock In Proceedings of the 39th Annual ACM Symposium on Theory of
  Computing.
\newblock \href{https://dx.doi.org/10.1145/1250790.1250873}{Pages 565--574}.
\newblock ~(2007).

\bibitem{ChiribellaDP2008}
Giulio Chiribella, Giacomo D'Ariano, and Paolo Perinotti.
\newblock ``Quantum circuit architecture''.
\newblock \href{https://dx.doi.org/10.1103/PhysRevLett.101.060401}{Physical
  Review Letters {\bf 101}, 060401}~(2008).

\bibitem{ChiribellaDP2009}
Giulio Chiribella, Giacomo D'Ariano, and Paolo Perinotti.
\newblock ``Theoretical framework for quantum networks''.
\newblock \href{https://dx.doi.org/10.1103/PhysRevA.80.022339}{Physical Review
  A {\bf 80}, 022339}~(2009).

\bibitem{DaskalakisGP2009}
Constantinos Daskalakis, Paul Goldberg, and Christos Papadimitriou.
\newblock ``The complexity of computing a {Nash} equilibrium''.
\newblock \href{https://dx.doi.org/10.1137/070699652}{SIAM Journal on Computing
  {\bf 39}, 195--259}~(2009).

\bibitem{ChenDT2009}
Xi~Chen, Xiaotie Deng, and Shang-Hua Teng.
\newblock ``Settling the complexity of computing two-player {Nash}
  equilibria''.
\newblock \href{https://dx.doi.org/10.1145/1516512.1516516}{Journal of the ACM
  {\bf 56}, 14}~(2009).

\bibitem{Papadimitriou1994}
Christos Papadimitriou.
\newblock ``On the complexity of the parity argument and other inefficient
  proofs of existence''.
\newblock \href{https://dx.doi.org/10.1016/S0022-0000(05)80063-7}{Journal of
  Computer and system Sciences {\bf 48}, 498--532}~(1994).

\bibitem{EtessamiY2010}
Kousha Etessami and Mihalis Yannakakis.
\newblock ``On the complexity of {Nash} equilibria and other fixed points''.
\newblock \href{https://dx.doi.org/10.1137/080720826}{SIAM Journal on Computing
  {\bf 39}, 2531--2597}~(2010).

\bibitem{GibbonsHW2004}
Kathleen Gibbons, Matthew Hoffman, and William Wootters.
\newblock ``Discrete phase space based on finite fields''.
\newblock \href{https://dx.doi.org/10.1103/PhysRevA.70.062101}{Physical Review
  A {\bf 70}, 062101}~(2004).

\bibitem{Gross2006}
David Gross.
\newblock ``Hudson's theorem for finite-dimensional quantum systems''.
\newblock \href{https://dx.doi.org/10.1063/1.2393152}{Journal of Mathematical
  Physics {\bf 47}, 122107}~(2006).

\bibitem{AroraB2009}
Sanjeev Arora and Boaz Barak.
\newblock ``Computational complexity: A modern approach''.
\newblock \href{https://dx.doi.org/10.1017/CBO9780511804090}{Cambridge
  University Press}. ~(2009).

\bibitem{NielsenC2000}
Michael Nielsen and Isaac Chuang.
\newblock ``Quantum computation and quantum information''.
\newblock \href{https://dx.doi.org/10.1017/CBO9780511976667}{Cambridge
  University Press}. ~(2000).

\bibitem{Wilde2017}
Mark Wilde.
\newblock ``Quantum information theory''.
\newblock \href{https://dx.doi.org/10.1017/9781316809976}{Cambridge University
  Press}. ~(2017).
\newblock second edition.

\bibitem{Watrous2018}
John Watrous.
\newblock ``Theory of quantum information''.
\newblock \href{https://dx.doi.org/10.1017/9781316848142}{Cambridge University
  Press}. ~(2018).

\bibitem{Bredon1993}
Glen Bredon.
\newblock ``Topology and geometry''.
\newblock \href{https://dx.doi.org/10.1007/978-1-4757-6848-0}{Volume 139 of
  Graduate Texts in Mathematics}.
\newblock Springer. ~(1993).

\bibitem{Glicksberg1952}
Irving Glicksberg.
\newblock ``A further generalization of the {Kakutani} fixed point theorem,
  with application to {Nash} equilibrium points''.
\newblock \href{https://dx.doi.org/10.2307/2032478}{Proceedings of the American
  Mathematical Society {\bf 3}, 170--174}~(1952).

\bibitem{Nash1951}
John Nash.
\newblock ``Non-cooperative games''.
\newblock \href{https://dx.doi.org/10.2307/1969529}{Annals of Mathematics,
  Second Series {\bf 54}, 286--295}~(1951).

\bibitem{GroetschelLS88}
Martin Gr\"otschel, Laszlo Lov\'asz, and Alexander Schrijver.
\newblock ``Geometric algorithms and combinatorial optimization''.
\newblock
  \href{https://dx.doi.org/10.1007/978-3-642-78240-4}{Springer--Verlag}.
  ~(1988).

\bibitem{LemkeH1964}
Carlton Lemke and Joseph Howson.
\newblock ``Equilibrium points of bimatrix games''.
\newblock \href{https://dx.doi.org/10.1137/0112033}{Journal of the Society for
  industrial and Applied Mathematics {\bf 12}, 413--423}~(1964).

\end{thebibliography}

\end{document}